\definecolor{medgreen}{rgb}{0, 0.75, 0}
\definecolor{darkgreen}{rgb}{0, 0.3, 0}
\newtheorem{theorem}{Theorem}
\newtheorem{proposition}{Proposition}
\theoremstyle{definition}
\theoremstyle{remark}
\newcommand{\RN}[1]{\uppercase\expandafter{\romannumeral#1}}
\DeclareMathOperator{\Tr}{Tr}
\title{Spatial modeling of forest-savanna bistability: Impacts of fire dynamics and timescale separation}
\author[1,2]{Kimberly Shen}
\affil[1]{Department of Physics, Princeton University, USA}
\affil[2]{Peabody Institute, Johns Hopkins University, USA}
\author[4,5]{Simon Levin}
\author[3,4,5]{Denis D. Patterson\thanks{Corresponding author: denis.d.patterson@durham.ac.uk}}
\affil[3]{Department of Mathematical Sciences, Durham University, UK}
\affil[4]{Department of Ecology and Evolutionary Biology, Princeton University, USA}
\affil[5]{High Meadows Environmental Institute, Princeton University, USA}
\date{\today}
\begin{document}


\maketitle

\begin{abstract}
Forest-savanna bistability -- the hypothesis that forests and savannas exist as alternative stable states in the tropics -- and its implications are key challenges for mathematical modelers and ecologists in the context of ongoing climate change. To generate new insights into this problem, we present a spatial Markov jump process model of savanna forest fires that integrates key ecological processes, including seed dispersal, fire spread, and non-linear vegetation flammability. In contrast to many models of forest-savanna bistability, we explicitly model both fire dynamics and vegetation regrowth in a mathematically tractable framework. This approach bridges the gap between slow-timescale vegetation models and highly resolved fire dynamics, shedding light on the influence of short-term and transient processes on vegetation cover. In our spatial stochastic model, bistability arises from periodic fires that maintain low forest cover, whereas dense forest areas inhibit fire spread and preserve high tree density. The deterministic mean-field approximation of the model similarly predicts bistability, but deviates quantitatively from the fully spatial model, especially in terms of its transient dynamics. These results also underscore the critical role of timescale separation between fire and vegetation processes in shaping ecosystem structure and resilience.
\end{abstract}

\section{Introduction}
Satellite data of tree cover in Sub-Saharan Africa combined with mean annual rainfall (MAR) records have shown that tree cover exhibits a distinctly bimodal distribution in regions with intermediate MAR (1000 to 2000 mm/yr). Savannas (regions with sparse open-canopy tree cover less than $\approx 40 \%$ per unit area) and forests (regions with high closed-canopy tree cover above $\approx 70 \%$ per unit area) are frequently observed in the tropics, while intermediate tree cover ($40$ to $70 \%$) is rarely observed~\cite{van2018fire}. The notion of forests and savannas existing as alternative stable states has attracted a significant amount of attention and concern since it suggests that perturbations caused by climate change, drought, or human activity can cause large-scale conversion of forests to savanna, which may be extremely difficult to reverse due to hysteresis effects~\cite{cochrane1999positive}. 

The preservation of forests, in turn, is a crucial concern in mitigating climate change, as forests have significantly higher carbon storage capacities than savannas. Forests also support local human populations by providing building materials. In addition to forests, the maintenance of savannas is also critical to biodiversity and conservation efforts due to the high species richness of ecosystems located at the savanna-forest ecotone~\cite{behling2007late,myers2000biodiversity,overbeck2007brazil}. Therefore, a model that accurately predicts the responses of forest-savanna mosaics to various environmental disturbances would usefully inform various climate- and biodiversity-related interventions, such as selecting and maintaining protected areas and regulations of human ecological disturbances. As such, a maximally valuable model must encompass short-term disturbances, such as fires, and long-term effects, such as forest encroachment or loss. The model ideally must also be spatially explicit since the fundamental driving processes of the system (e.g., seed dispersal and fire spread) have a highly spatially inhomogeneous structure.

A widely accepted explanation for the observed bimodal distribution of forests and savannas is that regular disturbance by fires can maintain open-canopy savannas. In contrast, denser closed-canopy forests sufficiently suppress the flammable grassy layer, which limits fire spread and lowers the probability of large fires, stabilizing the forest state. Mechanistically, fires typically spread readily within savannas fueled by highly flammable grasses but do not significantly penetrate forest boundaries~\cite{biddulph1998fuels,bond2008limits,hennenberg2008detection,hoffmann2009tree,kellman1997fire, swaine1992effects}. Thus, regions with sufficiently high grass cover may burn frequently enough to maintain savanna conditions despite climatological conditions supporting forest growth~\cite{brando2014abrupt,brando2012fire,cochrane2002fire}. The original Staver-Levin model~\cite{staver2011tree,staver2012integrating} and related models~\cite{accatino2010tree,tilman1994competition}, which incorporate fire-vegetation interactions in a spatially implicit mean-field approximation, have been widely used to explain and study forest-savanna bistability~\cite{staal2018resilience,van2014tipping,wuyts2017amazonian,yatat2018tribute}. We will use an extension of the Staver-Levin modeling framework as the basis for our study, but our conclusions apply more broadly across this class of spatially implicit forest-savanna models.

Many forest fire models, such as the Staver-Levin model, use mean-field approximations to keep them interpretable and analytically tractable and only indirectly incorporate spatial processes. For example,~\cite{hoyer2021impulsive,klimasara2023model,magnani2023fire} are recently proposed models that use a mean-field (i.e. spatially implicit) approach but add discontinuous, stochastic changes in vegetation proportions to implicitly account for fire spreading events. However, explicitly spatial models have some advantages over non-spatial models, which implicitly incorporate spatial effects~\cite{deangelis2017spatially}. For instance, spatial models have greater practical applicability and more accurately replicate the behavior of systems in which local mechanisms amplified by positive feedback, such as fire, can cause large-scale effects. Thus, to construct an accurate and practically applicable model, it is necessary to include sufficient detail to model both local conditions and the dynamics that they generate. Spatial structure can also profoundly affect the behavior of a system. For instance,~\cite{durrett2018heterogeneous} found that in a spatial model of grass-forest competition, both states can co-exist only when transition rates are spatially heterogeneous. 

Various other spatial models of fires in savanna-forest systems have already been proposed and studied, but have shortcomings. For example, some forest fire models are parametrized based on empirical data~\cite{aleman2018spatial,archibald2012evolution,lasslop2016multiple,rammer2019scalable}, but such models cannot transparently demonstrate the underlying dynamics of the system nor be applied to novel situations. Other studies based on field sampling of species diversity, fire frequency and/or soil quality have similar drawbacks~\cite{beckett2022pathways,sagang2024interactions}. Some mean-field models have been extended to incorporate spatial effects by adding diffusion terms to the nonspatial dynamics~\cite{goel2020dispersal,zelnik2024savanna}. However, these models are limited by an inability to model anisotropic forest seed dispersal and do not explicitly model fire dynamics. Explicit fire dynamics are essential for examining the behavior of grass-forest systems on shorter timescales and studying the effects of seasonality-dependent flammability on forest dynamics~\cite{scholes1997tree}. Nonetheless, these studies have already produced interesting results, such as the dependence of forest front travel speed on the curvature of the front and the need for a critical patch size to ensure growth, adding a hysteresis-like quality to the system's behavior~\cite{durrett2018heterogeneous, goel2020dispersal}. 

Most spatially explicit forest fire models employ a square grid and use nearest neighbor-dependent transition probabilities to model spatial processes such as fire spread and seed dispersal~\cite{abades2014fire,accatino2016trees, fair2020spatial,hochberg1994influences,li2019spatial,staal2018resilience,van2018fire,wuyts2023emergent}. These models have already led to notable findings. For example,~\cite{wuyts2023emergent} explicitly demonstrated that fire-vegetation feedback can maintain bistability in tree cover states,~\cite{fair2020spatial} found that forest-savanna mosaics occur under limited conditions, suggesting that such mosaics are vulnerable to loss under climate change, and~\cite{abades2014fire} found a second order phase transition in spanning cluster formation probability. However, these models have some intrinsic limitations. For instance, they cannot incorporate inhomogeneities in vegetation density and long-distance or anisotropic spreading of fire and seeds. Additionally, the ability of these models to explain observations on short time horizons may be limited by the relatively short timescale of fire dynamics compared to the timestep increment (typically 1 year).  

The paper is organized as follows: Section \ref{sec.model} introduces the spatial stochastic modeling framework that forms the basis for this study, including discussion on plausible parameter regimes and rigorous links to continuum models. Section \ref{sec.nonspatial_model} provides a bifurcation analysis of a nonspatial version of the model for later comparison with the spatial version and other similar nonspatial forest-savanna models in the extant literature. Section \ref{sec.spatial} studies the dynamics of the full spatial stochastic model and contrasts them with the dynamics of the nonspatial model from the previous section, emphasizing the key role of transient dynamics on short timescales. Finally, Section \ref{sec.conclusion} concludes with a synthesis of the main findings and highlights some future directions and open problems in the mathematical modeling of forest-savanna bistability.

\section{A Spatial Stochastic Forest Fire Model}\label{sec.model}
Ecologically, tropical forests differ markedly from forests of coniferous trees in several key ways. First, savanna trees are not readily killed by fire. Typically, they are only top-killed (i.e. only aerial biomass is burned) and can readily resprout~\cite{balch2011size,hoffmann2009tree,hoffmann2003comparative} or have thickened bark to prevent stem death~\cite{hoffmann2012ecological}. Second, fire does not propagate readily through tropical forests~\cite{archibald2009limits,hennenberg2006phytomass,pueyo2010testing} but does propagate rapidly through savanna grassland~\cite{wragg2018forbs}. Mechanistically, this is due to forest understory shade excluding flammable C4 grasses~\cite{hoffmann2012ecological} in addition to reduced wind speeds and increased moisture in the forest microclimate~\cite{hennenberg2008detection,hoffmann2012fuels}. Lastly, tropical forests have relatively open canopies that do not entirely shade out grass~\cite{hennenberg2006phytomass}. For brevity, we will henceforth refer to tropical forests as ``forest'' and savanna regions with lower tree density as ``grassland''. 

\subsection{Mathematical Description of the Model}
We use a spatially extended Markov jump process to model the dynamics of fires in forest-grass systems using the mathematical framework introduced in \cite{patterson2020probabilistic}. The set $\Omega \subset \mathbb{R}^2$ denotes the spatial domain and the random variables $\{r_i \in \Omega\,:\, i = 1,\dots, N \}$ are sites (locations) where vegetation is present or has the potential to grow. Each site $i$, with respective location $r_i$, undergoes a Markov jump process with possible states $F$ (forest), $G$ (grass), $B$ (burning), and $A$ (ash). We thus refer to models with this state space as the ``FGBA model'' for brevity. The transition rates between these states depend on the spatial locations and states of all the other sites at locations $\{r_j \in \Omega \, : \, j = 1,\dots,N, j \neq i\}$. This allows us to model both spatial spreading processes (e.g. forest and fire spreading) as well as non-spatial spontaneous transitions (e.g. fires ignited due to lightning strikes or human activity~\cite{archibald2012evolution}). The spatial locations of the vegetation can be drawn from any sufficiently regular distribution on $\Omega$, but, in this paper, we draw from a uniform distribution on $\Omega$ to reflect a homogeneous domain scenario. In the limit as $N \to \infty$, we completely fill the spatial domain with possible sites for vegetation. We can then obtain an approximate continuum model that can be studied as either a spatial mean-field model (via a system of integro-differential equations) or as a non-spatial mean-field model that presents as a system of ordinary differential equations (see section \ref{sec.mean_field} below for details).

We make the following assumptions about the possible transitions in our model:
\begin{enumerate}
    \item Forest trees can expand into nearby grass and ash due to local spreading of seeds,
    \item burning sites can expand into nearby forest and grass due to local fire spread,
    \item Burning sites are spontaneously quenched into ash at a fixed rate,
    \item Grass can regrow spontaneously from ash at a fixed rate due to homogeneous dispersal of grass seeds,
    \item Forest can spontaneously transition to grass at a fixed rate due to non-fire related mortality.
\end{enumerate}
Mathematically, we represent these assumptions by allowing each site to transition between the states $F$, $G$, $B$, $A$ at exponentially distributed times with rates given by the sum of spatial spreading and spontaneous processes (see Table~\ref{tab:neighbor}). The parameters in the transition rates are defined as follows:
\begin{itemize}
    \item $\varphi_G, \varphi_A$ are constants controlling the rate of forest seeding into grass and into ash, respectively.
    \item $\beta_F, \beta_G$ are constants controlling the rate of fire spread within forest and within grass, respectively.
    \item $W_F, W_B: \mathbb{R}^+ \rightarrow \mathbb{R}^+$ are forest spread and burning spread kernels, respectively, which control the extent of spatial spreading as a function of distance between the interacting sites. We assume that the kernels are $\mathcal{C}^{\infty}$.
\end{itemize}

\renewcommand{\arraystretch}{1.5}
\begin{table}[ht!]
\centering
\begin{tabular}{  c | c | c  } 
  \hline
  Transition & Transition rate at site $i$  & Ecological process \\
  \hline
  $G \rightarrow F$ & $ \frac{ \varphi_G}{N} \sum_{j=1}^NW_F(r_i-r_j) \mathbbm{1}_{\{X^j(t)=F\}}$ &  forest spreading into grass\\ 
  \hline
  $A \rightarrow F$ & $\frac{\varphi_A}{N} \sum_{j=1}^NW_F(r_i-r_j) \mathbbm{1}_{\{X^j(t)=F\}}$ & forest spreading into ash \\ 
  \hline
  $F \rightarrow B$ & $\frac{\beta_F}{N}\sum_{j=1}^NW_B(r_i-r_j)\mathbbm{1}_{\{X^j(t)=B\}}$ &  fire spreading into forest \\ 
  \hline
  $G \rightarrow B$ & $\frac{\beta_G}{N}\sum_{j=1}^NW_B(r_i-r_j)\mathbbm{1}_{\{X^j(t)=B\}}$ &  fire spreading into grass \\ 
  \hline
  $F \rightarrow G$ & $\mu$ & non-fire forest mortality \\ 
  \hline
  $B \rightarrow A$ & $q$ & fire quenching \\ 
  \hline
  $A \rightarrow G$ & $\gamma$ & grass regrowth from ash \\ 
  \hline 
\end{tabular} 
\caption{Transition rates between states and their corresponding ecological processes.}
\label{tab:neighbor}
\end{table}

For the neighbor (local) transition rate parameters for fire and forest invasion ($\varphi_G$, $\varphi_A$, $\beta_F$, $\beta_G$), the subscript is the first English letter of the state before the relevant transition and the main letter is the Greek letter of the state after the relevant transition. We use Greek letters without subscripts for spontaneous transition rates ($\mu, q, \gamma$). We normalize the neighbor spread rates by $\frac{\text{area}(\Omega)}{N}$ to ensure that the spreading rates remain bounded in the limit $N \rightarrow \infty$. 

In addition to the diffusive mode of fire spread described by the $W_B(\cdot)$ kernel and $\beta_G, \beta_F$ parameters, our modeling framework also includes a ``cascade'' mode of vegetation burning motivated by percolation theory. Assuming that fire spread through the grass-forest land is well-approximated as a percolation process, the flammability of any grass site will increase markedly once the fraction of grass sites in the nearby vicinity of the grass site surpasses the critical percolation threshold~\cite{schertzer2015implications}; this mechanism is also well supported by empirical data~\cite{archibald2009limits,pueyo2010testing,van2018fire}. We incorporate this effect by adding the terms
\begin{align}
    \Phi_G\bigg(\frac{1}{N}\sum_{i=1}^N W_G(r_i -r_j)\mathbbm{1}_{\{X^j(t)=G\}}\bigg) \quad \text{ and } \quad \Phi_F\bigg(\frac{1}{N}\sum_{i=1}^N W_G(r_i -r_j)\mathbbm{1}_{\{X^j(t)=G\}}\bigg)\nonumber
\end{align}
to the $G \rightarrow B$ and $F \rightarrow B$ transition rates of site $i$ at time $t$. Here $W_G: \mathbb{R}^+ \rightarrow \mathbb{R}^+$ is another smooth spreading kernel and $\Phi_G, \Phi_F:\mathbb{R}^+ \rightarrow \mathbb{R}^+$ are smooth sigmoidal functions which output flammability as a function of the local grass cover and the kernel $W_G$. More explicitly, we assume that $\Phi_G(\cdot)$ and $\Phi_F(\cdot)$ will take the form
\begin{align*}
    \Phi_G(x) = g_0 + \frac{g_1 - g_0}{1 + e^{-(x-\theta_G)/s_G}} \quad \text{ and } \quad \Phi_F(x) = f_0 + \frac{f_1 - f_0}{1 + e^{-(x-\theta_F)/s_F}}
\end{align*}    
where the parameters are defined as follows: 
\begin{itemize}
    \item $f_0, g_0$ are the baseline spontaneous flammability of forest or grass sites when no other grass sites are present, e.g. ignitions due to lightning strikes or human activity~\cite{archibald2012evolution}.
    \item $f_1, g_1$ are the total flammability of a grass or forest site, respectively in the limit when the land patch is purely grassland.
    \item $\theta_{F}, \theta_G$ are the percolation thresholds in forest and grass, respectively (we will use $\theta = \theta_F = \theta_G \approx 0.4$ as used for percolation in a square lattice~\cite{gebele1984site}).
    \item $s_F, s_G$ are non-negative constants controlling the width of forest and grass sigmoids, respectively. The sigmoids are assumed to be smooth approximations of increasing step functions, so we set $s_F = s_G = 0.05$.
\end{itemize}

The FGBA model outlined above spans multiple distinct timescales. Fire dynamics occur on a time scale of hours, grass regrowth on a time scale of months, and forest dynamics occur on a time scale of decades. In particular, we will estimate the parameter values from the expected time between events. Since we assume the transitions follow an exponential distribution, the rates are the inverse of the expected time between events. The rate parameters all have the same units of yr$^{-1}$, and we assume that the vegetation sites are separated by an average distance of $\SI{10}{\meter}$. 

The vegetation sites are chosen randomly within a compact square domain $[0, L] \times [0, L] \subset \mathbb{R}^2$ for some fixed $L >0$. We use periodic boundary conditions to reduce boundary effects and to model an infinite domain. The distance between two points $r, s \in [0, L] \times [0,L]$ is then computed as
\begin{align*}
    |r - s|^2 =  \frac{L}{2\pi} \Big( \Big[ \text{arg}\Big(e^{i \frac{2 \pi (r_x - s_x)}{L}}\Big)  \Big]^2 + \Big[ \text{arg}\Big(e^{i \frac{2 \pi (r_y - s_y)}{L}}\Big) \Big]^2 \Big)
\end{align*}
where the argument function has range $[0, 2\pi)$. 

We use Gaussian functions for the spreading kernels $W_G(\cdot)$, $W_F(\cdot)$ and $W_B(\cdot)$, although in principle, different functions could be used to model alternative spreading mechanisms if desired (such as very long range seed dispersal modeled by heavy-tailed kernels). A vegetation site located at $r_i \in \Omega$ has spreading kernels
\begin{align*}
    W_{\square}(r_j, r_i) &= \frac{1}{2 \pi \sigma_{\square}^2}\text{exp}\Big(-\frac{|r_j - r_i|^2}{2\sigma_{\square}^2}\Big)
\end{align*}
where $\square = G, F,$ or $B$. To preserve the physical meaning of the three parameters $\sigma_{\square}$ under changes in $L$ and/or $N$, we choose the values of $\sigma_{\square}$ to be in units of the average spacing between the vegetation sites in the domain which we define as $\Delta x \equiv \frac{L}{\sqrt{N}}$. The normalization constant of $W_{\square}$ was chosen that 
\begin{align}
    \lim_{N \rightarrow \infty} \frac{1}{N}\sum_{i=1}^NW_{\square}(r_j, r_i) &= \int_{\Omega} W_{\square}(r_j, r_i) dr_i \approx \int_{\mathbb{R}^2} W_{\square}(r_j, r_i) dr_i = 1
    \label{eqn:W_normalization}
\end{align}
where the approximation is valid when $\sigma_{\square} \ll L$. 

The model described in this section is similar to previous models proposed by Hebert-Dufresne et al. ~\cite{hebert2018edge} and Wuyts and Sieber~\cite{wuyts2023emergent}, particularly the choice of the state space, but has several distinguishing features. Firstly, our model allows the sites to be distributed arbitrarily in $\Omega$, while the related models use a square lattice of sites. Secondly,~\cite{hebert2018edge} and~\cite{wuyts2023emergent} include additional spontaneous transitions $G, A, \rightarrow F$ to model the homogeneous long-distance dispersal of forest seeds. Our model more realistically incorporates long-distance forest seed dispersal by allowing an appropriately heavy-tailed kernel for $W_F(\cdot)$. Lastly, the related models assume that spreading processes occur strictly via nearest neighbor spreading across adjacent lattice sites, while our model allows for more general spreading mechanisms via heavy-tailed kernels often used for seed dispersal~\cite{nathan2012dispersal}. Moreover, as we show below, our spatial stochastic model has immediate, rigorous connections with both spatial and non-spatial mean-field versions, which greatly assist with analysis and interpretability (see section \ref{sec.mean_field}).

\subsubsection{Parameter estimates}
In this section, we set reasonable ranges and approximate values for all parameters in the FGBA model. We assume that $L$ represents the physical side length of the area of land under study. Then, the average spacing between vegetation sites, assuming a uniform probability distribution, is $\Delta x \equiv \frac{L}{\sqrt{N}}$. We now estimate the parameter values by considering the case where $\Delta x = 1$ m and $L = 100$ m are fixed, requiring $N = 10^4$. 

\begin{itemize}
    \item A square patch of grassland of side length $\Delta x$ surrounded by fire will burn after several minutes. Using Table~\ref{tab:neighbor} and Eq.~(\ref{eqn:W_normalization}), the burning rate can be approximated as $\beta_G$ then $\beta_G \approx 10^5$ yr$^{-1}$.
    \item A square patch of forest of side length $\Delta x$, which is surrounded by fire, will burn after about an hour. The burning rate can be approximated as $\beta_F$ so $\beta_F \approx 10^4$ yr$^{-1}$.
    \item A burning site is expected to burn for several hours  before turning into ash, i.e. $q \approx 10^3$ yr$^{-1}$
    \item An ash site is expected to regrow grass in several months i.e. $\gamma \approx 10^1$ yr$^{-1}$.
    \item A square patch of grass or ash of side length $\Delta x$ surrounded by forest will grow a forest tree after several years to a decade i.e. $\varphi_A \approx \varphi_G \approx 10^{0}$ to $10^{-1}$ yr$^{-1}$.
    \item A tree at a square forest site of side length $\Delta x$ will spontaneously die from non-fire related causes after about $100$ years i.e. $\mu \approx 10^{-2}$ yr$^{-1}$.
    \item A square grass or forest vegetation site of side length $\Delta x$ completely surrounded by forest will spontaneously catch fire once every $100$ years i.e. $f_0, g_0 \approx 10^{-2}$.
    \item A grass or forest vegetation site will have increased flammability when completely surrounded by grass compared to a forest. We set $f_0 = 10^{-1}$ yr$^{-1}$ and $g_0 = 10^0$ yr$^{-1}$.
    \item The standard deviation of a fire patch spreading throughout grass or forest vegetation 
    is approximately $5$ m, so set $\sigma_F \approx 5$ m. 
    \item The standard deviation of a forest spreading throughout grass or ash is approximately $5$ m, so set $\sigma_B \approx 5$ m. 
    \item The flammability of a vegetation site is dependent on grass proportions in a $50$ m radius up to one standard deviation, so $\sigma_G \approx 50$ m. 
\end{itemize}
We summarize the parameter values, and a representative fixed value is given below in Table~\ref{tab:values_time_sep}. In this table, $G(t)$ represents the fraction of the landscape occupied by grass sites.

\begin{table}[ht!]
\begin{tabular}{ c | c | c | c } 
  \hline
  Parameter & Value & Units & Ecological Interpretation  \\ 
  \hline
  $\beta_G$ & $10^5$ & yr$^{-1}$ & magnitude of fire spread rate over grass \\
  \hline 
  $\beta_F$ & $10^4$ & yr$^{-1}$ & magnitude of fire spread rate over forest \\
  \hline
  $q$ & $10^4$ & yr$^{-1}$ & rate at which fire is quenched \\
  \hline
  $\gamma$ & $10^2$ & yr$^{-1}$ & rate at which grass regrows from ash \\
  \hline 
  $\varphi_A$ & $1$ & yr$^{-1}$ & magnitude of forest spread rate over ash \\
  \hline
  $\varphi_G$ & $1$ & yr$^{-1}$ & magnitude of forest spread rate over grass \\
  \hline 
  $\mu$ & $10^{-2}$ & yr$^{-1}$ & tree mortality rate due to non-fire-related causes \\
  \hline 
  $f_0$ & $10^{-2}$ & yr$^{-1}$ & spontaneous fire rate for each forest site when $G(t) \approx 0$ \\
  \hline 
  $f_1$ & $10^{-1}$ & yr$^{-1}$ & spontaneous fire rate for each forest site when $G(t) \approx 1$ \\
  \hline
  $g_0$ & $10^{-2}$ & yr$^{-1}$ & spontaneous fire rate for each grass site when $G(t) \approx 0$ \\
  \hline 
  $g_1$ & $10^{-1}$ & yr$^{-1}$ & spontaneous fire rate for each grass site when $G(t) \approx 1$ \\
  \hline
  $\sigma_F$ & $5$ & m & width of forest spreading kernel\\
  \hline
  $\sigma_B$ & $5$ & m & width of fire spreading kernel\\
  \hline
  $\sigma_G$ & $50$ & m & width of spontaneous fire kernel\\
  \hline
\end{tabular}
\caption{Estimates and representative values of the parameters for the FGBA model in units of yr$^{-1}$} \label{tab:values_time_sep}
\end{table}

There are three levels of time separation in the rate parameters given by $\beta_G,\beta_F, q >  \gamma \gg \varphi_A, \varphi_G, \mu, f_0, g_0$. This concludes our mathematical description of the spatial FGBA model. We summarize the model in the state transition diagram in Fig.~(\ref{fig:spatial_FGBA}), showing the relevant parameters and functions governing transitions between states at each vegetation site. 
\begin{figure}[ht]
	\centering
	\tikzset{forest/.style={circle, thick, minimum size=1.2cm, draw=medgreen!60, fill=medgreen!30},
		grass/.style={circle, thick, minimum size=1.2cm, draw=green!45, fill=green!15},
		burning/.style={circle, thick, minimum size=1.2cm, draw=orange!50, fill=orange!20},
        ash/.style={circle, thick, minimum size=1.2cm, draw=gray!50, fill=gray!20},
		arrow/.style={thick,->,>=stealth}}
	\begin{tikzpicture}[>=latex,text height=1.5ex,text depth=0.25ex]
	\matrix[row sep=1cm,column sep=1cm] (mat) {
		&
		\node (F) [forest] {$F$}; &
		&
		\node (G)   [grass] {$G$}; \\
		&
		\node (A) [ash] {$A$}; &
		&
		\node (B)   [burning] {$B$}; \\
	};
\begin{scope}[arrow]
\draw[medgreen] (F.10) -- (G.170) node[midway, above]{$\mu$};
\draw[orange] (F) -- (B) node[midway, left]{
    $\Phi_F, \beta_F$};
\draw[orange] (A.30) -- (G.225) node[midway, below]{$\gamma$};
\draw[orange] (B) -- (A) node[midway, below]{$q$};
\draw[orange] (G) -- (B) node[midway, right]{$\Phi_G,\beta_G$};
\draw[medgreen] (A) -- (F) node[midway, left]{$\varphi_A$};
\draw[medgreen] (G.190) -- (F.350) node[midway, below]{$\varphi_G$};
\end{scope}
	\end{tikzpicture}
	\caption{State transition diagram of the spatial FGBA model. Transition arrows are labeled with the relevant parameters and/or flammability functions. Forest and grass/fire timescale transitions are shown in green and orange, respectively.}
 \label{fig:spatial_FGBA}
\end{figure}
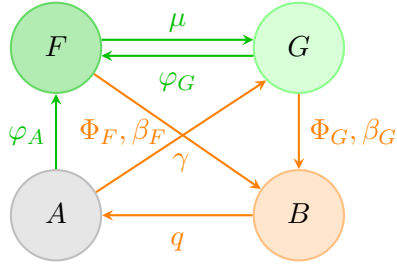

\subsection{Mean-Field Approximations of the Spatial FGBA Model}\label{sec.mean_field}
Theorem 2.2.1 in~\cite{patterson2020probabilistic} shows that the spatial FGBA model with uniformly randomly distributed sites on $\Omega$ converges to a spatial McKean-Vlasov process in the limit as $N$ (the number of sites) tends to infinity. We may think of this limiting processes as a type of spatial mean-field limit since it retains information about spatial structure, but the vegetation at each location only interacts with other locations through weighted spatial averages of all other locations. Let $X(r,t)$ indicate the probability that the limiting McKean-Vlasov process at location $r$ and time $t$ is in state $X$. Then the forward equation governing the dynamics of the probability densities of the limiting process are given by the following system of integro-differential equations (IDEs):
\begin{align}\label{eq.spatial_MF}
    \frac{d}{dt} F(r,t) &= (\varphi_G G(r,t)  + \varphi_A A(r,t))\int_{\Omega}W_F(r-r')F(r',t)dr' - \mu F(r,t)\nonumber\\
    &-\beta_F F(r,t) \int_{\Omega}W_B(r-r')B(r',t)dr' -\Phi_F\bigg[\int_{\Omega}W_G(r-r')G(r',t)dr'\bigg] F(r,t),\nonumber\\
    \frac{d}{dt}G(r,t) &= \gamma A(r,t) - \varphi_G G(r,t) \int_{\Omega}W_F(r-r')F(r',t)dr' +\mu F(r,t) \nonumber\\
    &-\Phi_G\bigg[\int_{\Omega}W_G(r-r')G(r',t)dr'\bigg] G(r,t) - \beta_G G(r,t) \int_{\Omega}W_B(r-r')B(r',t)dr',\nonumber\\
    \frac{d}{dt}B(r,t) &= ( \Phi_F F(r,t) + \Phi_G G(r,t) )\bigg[\int_{\Omega}W_G(r-r')G(r',t)dr'\bigg] - qB(r,t) \\
    &+ (\beta_G G(r,t) + \beta_F F(r,t)) \int_{\Omega}W_B(r-r')B(r',t)dr',\nonumber\\
    \frac{d}{dt} A(r,t) &= q B(r,t) - \gamma A(r,t) - \varphi_A A(r,t) \int_{\Omega}W_F(r-r')F(r',t) dr',\nonumber
\end{align}
for each $r \in \Omega$. Moreover, $F(r,t)+G(r,t)+B(r,t)+A(r,t) = 1$ for each $(r,t)\in \Omega\times \mathbb{R}_+$. 

To gain a rough, intuitive understanding of the stochastic FGBA model's dynamics, we will analyze a more analytically tractable nonspatial mean-field model derived from \eqref{eq.spatial_MF} that takes the form of a system of ordinary differential equations. The general structure of the solution space of the spatial model \eqref{eq.spatial_MF} is not analytically tractable. One can investigate the linear stability of spatially homogeneous steady-state solutions, but we expect similar stability criteria to the nonspatial mean-field case for most reasonable choices of the kernel functions, as has been shown for related forest-savanna models posed in this framework~\cite{patterson2024pattern}. Using the nonspatial mean-field approximation results as a baseline will also allow us to isolate the impact of spatial structure in the stochastic FGBA model. In addition to spatial structure, the stochastic FGBA model allows us to retain stochastic, discrete and finite-size effects, which we expect to be especially important for short-time and transient dynamics~\cite{hastings2018transient}, which we can contrast against the nonspatial mean-field model.

To obtain a nonspatial (spatially implicit) mean-field model from \eqref{eq.spatial_MF}, we assume that the types of states are well-mixed throughout the spatial domain $\Omega$ so that interactions depend only on the fractions of land occupied by each state. One way to achieve this is to take the kernels functions to be uniform on $\Omega$ and then define the land cover proportions $F(t) := \int_\Omega F(r',t)dr'/\text{area}(\Omega)$, $G(t) := \int_\Omega G(r',t)dr'/\text{area}(\Omega)$, and so on. The system \eqref{eq.spatial_MF} then simplifies to the system of ordinary differential equations:
\begin{align}
    \begin{cases}
    \frac{d}{dt}F(t) = (\varphi_G G + \varphi_A A) F -\beta_F B F -\Phi_F(G)F- \mu F \\
    \frac{d}{dt} G(t) = \gamma A +\mu F- \varphi_GFG -\Phi_G(G) G - \beta_GBG \\
    \frac{d}{dt} B(t) =\Phi_G(G) G + \Phi_F(G)F + (\beta_G G + \beta_F F)B -q B\\
    \frac{d}{dt} A(t) = q B - \gamma A - \varphi_AF A. \label{eqn:mf}
    \end{cases}
\end{align}

The state transition diagram for the spatial FGBA model in the mean field limit is illustrated in Fig.~(\ref{fig:markov_chain}). First notice that due to the $G \xrightarrow{\Phi(G)G} B \xrightarrow{qB} A \xrightarrow{\gamma A} G$ cycle then if any one of $\overline{G}, \overline{B}$, or $ \overline{A}$ is nonzero then all three must be nonzero. Next, since there is an $F \xrightarrow{\mu F}G$ transition then $\overline{F} > 0$ implies $\overline{G} > 0$. Thus, we can classify all steady states as either GBA (where $\overline{F} = 0$ and $\overline{G}, \overline{B}, \overline{A} \neq 0$) or FGBA (where $\overline{F}, \overline{G}, \overline{B}, \overline{A} \neq 0$). 
\begin{figure}[ht]
	\centering
	\tikzset{forest/.style={circle, thick, minimum size=1.2cm, draw=medgreen!60, fill=medgreen!30},
		grass/.style={circle, thick, minimum size=1.2cm, draw=green!45, fill=green!15},
		burning/.style={circle, thick, minimum size=1.2cm, draw=orange!50, fill=orange!20},
        ash/.style={circle, thick, minimum size=1.2cm, draw=gray!50, fill=gray!20},
		arrow/.style={thick,->,>=stealth}}
	\begin{tikzpicture}[>=latex,text height=1.5ex,text depth=0.5ex]
	\matrix[row sep=1.25cm,column sep=2.6cm] (mat) {
		&
		\node (F) [forest] {$F$}; &
		&
		\node (G)   [grass] {$G$}; \\
		&
		\node (A) [ash] {$A$}; &
		&
		\node (B)   [burning] {$B$}; \\
	};
\begin{scope}[arrow]
\draw[medgreen] (F.10) -- (G.170) node[midway, above]{$\mu F$};
\draw[orange] (F) -- (B.130) node[midway, left]{
    $\Phi_F(G)F + \beta_F BF\quad$};
\draw[orange] (A.30) -- (G.220) node[midway, below]{$\gamma A$};
\draw[orange] (B) -- (A) node[midway, below]{$qB$};
\draw[orange] (G) -- (B) node[midway, right]{$\Phi_G(G)G + \beta_GBG$};
\draw[medgreen] (A) -- (F) node[midway, left]{$\varphi_AFA$};
\draw[medgreen] (G.190) -- (F.350) node[midway, below]{$\varphi_GFG$};
\end{scope}
	\end{tikzpicture}
	\caption{FGBA model state transition diagram in the mean-field limit. Transition arrows are labeled by the transition rates. Forest and fire timescale transitions are shown in green and orange, respectively.}
 \label{fig:markov_chain}
\end{figure}
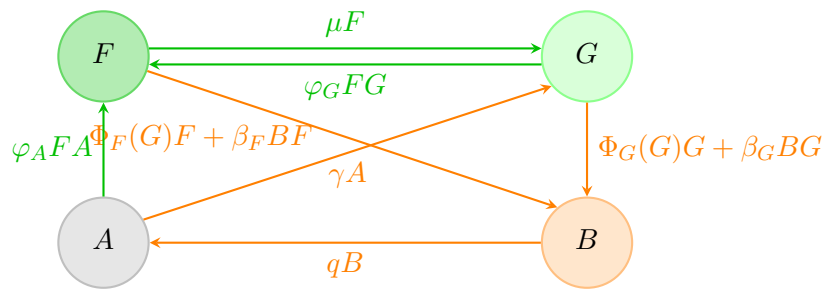

\section{Analysis of the nonspatial FGBA model}\label{sec.nonspatial_model}
Before studying the full spatial stochastic FGBA model introduced above, we perform some qualitative stability analysis of solutions to the nonspatial mean-field version of the model given by (\ref{eqn:mf}).
\subsection{GBA Steady States}
As it is the simplest subsystem of the model, we first consider the case when forest is absent from the landscape. Set $F = 0$ and use the relation $A = 1 - G - B$ to eliminate $A$ from (\ref{eqn:mf}) to reduce the system (\ref{eqn:mf}) to
\begin{align}
    \begin{cases} 
        \frac{d}{dt}G(t) &= \gamma (1-G-B) - \Phi_G(G) G- \beta_GBG \\
    \frac{d}{dt} B(t)  &= \Phi_G(G)G + \beta_G GB - qB.
    \end{cases} \label{eqn:GB}
\end{align} 
Establishing the forward invariance of the GBA subspace is straightforward, and so the proof of the following proposition is deferred to Appendix \ref{secA1}.

\begin{proposition}
For any ecologically relevant initial value (i.e. $G(0) \geq 0$, $B(0) \geq 0$, $G(0) + B(0) \leq 1$), the solution $(G(t), B(t))$ for Eqs.~(\ref{eqn:GB}) remains ecologically relevant for all times $t \in \mathbb{R}$.
\end{proposition}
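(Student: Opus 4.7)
The plan is to show the ecologically relevant triangle $T = \{(G, B) \in \mathbb{R}^2 : G, B \geq 0,\ G + B \leq 1\}$ is invariant under the planar ODE~\eqref{eqn:GB} and then combine this with smoothness of the vector field to extend the solution in time. The right-hand side of~\eqref{eqn:GB} is $C^\infty$ on all of $\mathbb{R}^2$ (polynomial in $(G,B)$ modulated by the smooth sigmoid $\Phi_G$), so Picard--Lindel\"of yields a unique $C^1$ solution on a maximal open interval $(T_-, T_+)$ containing $0$. The substance of the proposition is then that this solution cannot escape $T$ and that the maximal interval extends globally.

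To prove forward invariance I would check the outward-normal component of the vector field on each of the three edges of $\partial T$: on $\{G=0\}$ one computes $\dot G = \gamma(1-B) \geq 0$; on $\{B=0\}$ one computes $\dot B = \Phi_G(G)\,G \geq 0$; and on $\{G+B=1\}$ one computes $\tfrac{d}{dt}(G+B) = -qB \leq 0$. These three weakly-inward boundary conditions, combined with a standard Nagumo tangent-cone argument (equivalently a first-exit-time contradiction using continuity of the flow), force any trajectory starting in $T$ to remain in $T$ throughout $[0, T_+)$. Since $T$ is compact the solution cannot blow up in finite forward time, so $T_+ = +\infty$ and the forward half of the claim is settled.

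The backward direction $t<0$ is the more delicate half of the statement as worded. Each of the three sign inequalities above is in fact \emph{strict} on the relative interior of its edge, so the backward flow is strictly outward-pointing on $\partial T$; consequently backward invariance of $T$ does not follow from boundary-flow signs alone. The way I would handle $t<0$ is to note that for data in the interior of $T$, Picard--Lindel\"of plus boundedness of the vector field on a compact neighborhood of $T$ give existence on a maximal backward interval $(T_-, 0]$ with $T_- = -\infty$, so the trajectory is defined for all real times; the \emph{invariance} (``remains ecologically relevant'') part of the claim for $t<0$ must then be read in the restricted sense that it holds along the backward orbit precisely on the invariant subsets of $T$ (equilibria, periodic orbits, and their stable manifolds) that the subsequent bifurcation analysis actually invokes. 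The real computational substance of the proof is just the three one-line boundary-sign checks on $\partial T$; the main obstacle is not technical but interpretive, namely pinning down in what sense the two-sided-in-time statement ``for all $t \in \mathbb{R}$'' is to be understood, and being explicit about that interpretation in the appendix.
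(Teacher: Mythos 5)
Your proof is correct and rests on exactly the same computation as the paper's: the three boundary-sign checks $\dot G|_{G=0}=\gamma(1-B)\geq 0$, $\dot B|_{B=0}=\Phi_G(G)G\geq 0$, and $(\dot G+\dot B)|_{G+B=1}=-qB\leq 0$, which is the entirety of the argument given in Appendix~\ref{secA1}. Your additional observation that these inequalities only yield \emph{forward} invariance, so that the ``for all $t\in\mathbb{R}$'' phrasing of the statement is not literally established, is a fair point --- the paper's own proof is explicitly labelled as forward invariance and does not address backward time either.
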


Next, we solve for the steady-state equilibria of the system by setting both equations in System~(\ref{eqn:GB}) equal to $0$. Solving the system then gives the condition
\begin{align}
    \Phi_G(\overline{G}) &= (1-\overline{G})\frac{\gamma q}{\gamma + q} \bigg(\frac{1}{\overline{G}}-\frac{\beta_G}{q}\bigg) \nonumber
\end{align}
for a steady state (note that division by $\overline{G}$ is a valid operation since $\overline{G}>0$). Unfortunately, closed-form expressions of the equilibria cannot be obtained except in special cases due to the sigmoidal properties of the function $\Phi_G$, but we can readily deduce the uniqueness of the steady state: 

\begin{proposition}\label{prop_2}
For any positive parameter values $\gamma, q,$ and $\beta_G$ there exists a unique GBA steady state. 
\end{proposition}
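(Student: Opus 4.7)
The plan is to eliminate $\overline{B}$ from the equilibrium equations of system~(\ref{eqn:GB}) and reduce the problem to showing that a single scalar equation in $\overline{G}$ has a unique solution on the ecologically admissible interval. First observe that $\overline{G}=0$ is inadmissible: the second equilibrium equation would then force $\overline{B}=0$, and the first would then force $\gamma=0$, contradicting positivity of $\gamma$. Solving the second equilibrium equation for $\overline{B}$ yields
\[
\overline{B} \;=\; \frac{\Phi_G(\overline{G})\,\overline{G}}{q-\beta_G\,\overline{G}},
\]
which is positive precisely when $\overline{G}\in(0,q/\beta_G)$. Substituting this back into the first equation reproduces the condition displayed just above the proposition, so any GBA equilibrium satisfies $\Phi_G(\overline{G})=R(\overline{G})$ on the open interval $I:=(0,\min\{1,q/\beta_G\})$, where
\[
R(G) \;:=\; (1-G)\,\frac{\gamma q}{\gamma+q}\!\left(\frac{1}{G}-\frac{\beta_G}{q}\right).
\]

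I would next show that $R$ is strictly decreasing on $I$. A direct computation gives
\[
R'(G) \;=\; \frac{\gamma q}{\gamma+q}\!\left(-\frac{1}{G^{2}}+\frac{\beta_G}{q}\right),
\]
and I would split into two cases. If $\beta_G\le q$ then $\beta_G/q\le 1<1/G^{2}$ throughout $(0,1)$, so $R'<0$ on $I=(0,1)$. If $\beta_G>q$ then the unique critical point of $R$ is at $G^{*}=\sqrt{q/\beta_G}$, which strictly exceeds $q/\beta_G$ (the right endpoint of $I$); hence $R'<0$ on $I=(0,q/\beta_G)$ as well. Since $\Phi_G$ is a strictly increasing, bounded sigmoid with $\Phi_G\ge g_0>0$, the difference $\Delta(G):=R(G)-\Phi_G(G)$ is continuous and strictly decreasing on $I$.

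Finally I would close the argument with the intermediate value theorem. As $G\to 0^{+}$, the $1/G$ term makes $R(G)\to+\infty$ while $\Phi_G$ stays bounded, so $\Delta\to+\infty$. At the right endpoint of $I$, $R$ vanishes (either because $G=1$ or because $G=q/\beta_G$) while $\Phi_G>0$, giving $\Delta<0$ there. Strict monotonicity of $\Delta$ then produces exactly one root $\overline{G}\in I$; the corresponding $\overline{B}$ from the displayed formula is positive, and $\overline{A}=1-\overline{G}-\overline{B}$ is automatically non-negative because the first equilibrium equation rearranges to $\gamma\,\overline{A}=\Phi_G(\overline{G})\,\overline{G}+\beta_G\,\overline{B}\,\overline{G}\ge 0$. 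The only delicate point is keeping track of the two cases $\beta_G\le q$ versus $\beta_G>q$ when verifying monotonicity of $R$ and identifying the correct right endpoint of the admissible interval; no step presents a serious obstacle.
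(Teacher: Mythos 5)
Your argument is correct and follows essentially the same route as the paper's proof: reduce to the scalar equation $\Phi_G(\overline{G})=\mathcal{F}(\overline{G})$, show $\mathcal{F}$ is strictly decreasing on $(0,\min\{1,q/\beta_G\})$ using the location of its critical point at $\sqrt{q/\beta_G}$, and conclude by monotonicity of $\Phi_G$ plus the intermediate value theorem, with the same case split on $q$ versus $\beta_G$. Your explicit recovery of $\overline{B}$ and the non-negativity check on $\overline{A}$ are welcome details the paper leaves implicit, but they do not change the substance of the argument.
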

The proof of Proposition \ref{prop_2} is deferred to Appendix \ref{secA1} in the interests of brevity. 

\subsubsection{Stability of the GBA Steady State}
In this section, we demonstrate two results regarding the stability of the GBA steady state. The first concerns stability within the GBA subsystem, and the second gives an interpretable condition characterizing when we can expect the GBA subsystem to be stable to forest invasion.
\begin{theorem}
    The GBA steady state is stable to perturbations within the $F=0$ subspace for any parameter values. 
\end{theorem}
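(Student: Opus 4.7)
The plan is to verify linear stability of the GBA equilibrium $(\overline{G},\overline{B})$ by computing the Jacobian of \eqref{eqn:GB} and checking the Routh--Hurwitz conditions for a two-dimensional system (negative trace and positive determinant).

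First I would compute the Jacobian. Abbreviating $\alpha := \Phi_G'(\overline{G})\overline{G} + \Phi_G(\overline{G}) + \beta_G\overline{B}$, a short calculation gives
\begin{equation*}
J(\overline{G},\overline{B}) = \begin{pmatrix} -\gamma - \alpha & -\gamma - \beta_G\overline{G} \\ \alpha & \beta_G\overline{G} - q \end{pmatrix}.
\end{equation*}
Note $\alpha>0$ since $\Phi_G$ is increasing and non-negative and $\overline{B}\geq 0$.

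The key observation is that the equilibrium equation for $B$, namely $\Phi_G(\overline{G})\overline{G} = (q - \beta_G\overline{G})\overline{B}$, together with $\overline{G},\overline{B}>0$ (established in Proposition \ref{prop_2}) forces $q - \beta_G\overline{G} > 0$. With this in hand the trace
\begin{equation*}
\operatorname{tr} J = -\gamma - \alpha - (q - \beta_G\overline{G})
\end{equation*}
is manifestly negative. For the determinant, cross-multiplying and cancelling the $\pm\alpha\beta_G\overline{G}$ terms yields
\begin{equation*}
\det J = \gamma(q - \beta_G\overline{G}) + \alpha(q+\gamma),
\end{equation*}
which is strictly positive by the same inequality $q > \beta_G\overline{G}$. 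Hence both eigenvalues of $J$ have negative real part, and the steady state is asymptotically stable within the $F=0$ subspace.

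The main (minor) obstacle is purely the bookkeeping used to reach the compact form of $\det J$; once the cancellation is spotted, positivity is transparent. Conceptually the only substantive step is extracting the inequality $q > \beta_G\overline{G}$ from the equilibrium condition, and this is where the biological meaning sits: the quenching rate must dominate the in-grass fire-spread contribution at equilibrium, otherwise burning cover would have no steady state. No global Lyapunov-type argument is needed, since linear stability at a hyperbolic equilibrium suffices for the statement as phrased.
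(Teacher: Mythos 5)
Your proof is correct and follows essentially the same route as the paper: linearize the reduced $(G,B)$ system at the GBA equilibrium and verify the Routh--Hurwitz conditions, with negativity of the trace coming from $q>\beta_G\overline{G}$. The only cosmetic difference is that you extract $q>\beta_G\overline{G}$ directly from the $\dot{B}=0$ equilibrium relation and establish $\det J>0$ by explicit algebra, whereas the paper obtains the same inequality from the location of the root of $\mathcal{F}-\Phi_G$ and recasts the determinant condition as the slope comparison $\Phi_G'(\overline{G})>\mathcal{F}'(\overline{G})$; both closures are valid.
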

\begin{proof}
    We compute the linearization matrix at $F = 0$ of System~\ref{eqn:GB} which gives
    \begin{align*}
        \textbf{J}(F = 0, G,B) = \begin{bmatrix} -\gamma-\Phi_G'(G)G-\Phi_G(G)-\beta_G B & -\gamma - \beta_G G \\
        \Phi_G'(G) G + \Phi_G(G) + \beta_G B & -q + \beta_G G
    \end{bmatrix}.
    \end{align*}
    The fixed point $(\overline{G}, \overline{B})$ is stable to perturbations within the $F = 0$ boundary exactly when $J_{F=0}(\overline{G}, \overline{B})$ has eigenvalues with strictly negative real parts. By the Routh-Hurwitz stability criteria this occurs if and only if $\Tr(\textbf{J}_{F=0}(\overline{G}, \overline{B})) < 0$ and $\det(\textbf{J}_{F=0}(\overline{G}, \overline{B})) >0$~\cite{edelstein2005mathematical}. Note that since $\overline{G} < \frac{q}{\beta_G}$ then $-q + \beta_G \overline{G} < 0$ so $\Tr(J_{F=0}(\overline{G}, \overline{B})) < 0$ is clearly satisfied. The condition $\det(\textbf{J}_{F=0}(\overline{G}, \overline{B})) >0$ is equivalent to 
    \begin{align*}
        \Phi'(\overline{G}) > \frac{\gamma q}{(\gamma + q)^2}\bigg(\frac{\beta}{q}-\frac{1}{\overline{G}^2} \bigg) = \mathcal{F}'(\overline{G})
    \end{align*}
    which clearly holds from graphical inspection or more rigorously by recalling from our earlier argument that $\Phi_G'(\overline{G})  \geq 0$ while $\mathcal{F}'(\overline{G}) < 0$. 
\end{proof}

\begin{proposition}
    The GBA steady state is stable to invasions by $F$ exactly when 
    \begin{align}
        \beta_F \overline{B} + \Phi_F(\overline{G}) + \mu > \varphi_G \overline{G} + \varphi_A \overline{A}. \label{eqn:GBA_stability}
    \end{align}
\end{proposition}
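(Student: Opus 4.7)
The plan is to linearize the full mean-field system (\ref{eqn:mf}) about the GBA steady state $(0,\overline{G},\overline{B},\overline{A})$ and to exploit the fact that every term in $\dot{F}$ contains $F$ as a common factor. After using the conservation law $A = 1 - F - G - B$ to reduce to three independent variables $(F,G,B)$, the Jacobian at the GBA steady state will be block-triangular: the row corresponding to $F$ will contain only a single nonzero entry (the diagonal). This structure collapses the stability question into checking the sign of one explicit eigenvalue $\lambda_F$.

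First, I would compute the $F$-row of the Jacobian. Writing $\dot{F} = \big[(\varphi_G G + \varphi_A A) - \beta_F B - \Phi_F(G) - \mu\big]F$, each partial derivative $\partial \dot{F}/\partial G$, $\partial \dot{F}/\partial B$ (after eliminating $A$) still carries an overall factor of $F$, so these entries vanish when evaluated at $\overline{F}=0$. Only the diagonal entry survives, and a direct calculation gives
\begin{align*}
\lambda_F \;=\; \frac{\partial \dot{F}}{\partial F}\bigg|_{(0,\overline{G},\overline{B})} \;=\; \varphi_G \overline{G} + \varphi_A \overline{A} - \beta_F \overline{B} - \Phi_F(\overline{G}) - \mu.
\end{align*}

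Next, the block-triangularity of the Jacobian implies that its spectrum is the union of $\{\lambda_F\}$ with the spectrum of the lower $2\times 2$ block, which is precisely the Jacobian $\mathbf{J}_{F=0}(\overline{G},\overline{B})$ analyzed in the preceding theorem. Since that block has already been shown to have eigenvalues with strictly negative real parts under the Routh--Hurwitz criterion, local stability of the GBA steady state in the full system is equivalent to $\lambda_F < 0$. Rearranging this inequality yields precisely the stated condition
\begin{align*}
\beta_F \overline{B} + \Phi_F(\overline{G}) + \mu > \varphi_G \overline{G} + \varphi_A \overline{A},
\end{align*}
and reversing the inequality gives transversal instability (i.e., successful forest invasion).

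This argument is essentially a routine linear stability calculation; the only genuinely delicate point is the bookkeeping when eliminating $A$ via the conservation constraint, since one must verify that the off-diagonal entries of the $F$-row indeed pick up a factor of $F$ after substitution (and therefore vanish at $\overline{F}=0$). Beyond this, no additional obstacle is expected, and the argument will extend essentially verbatim to analyze stability of any FGBA steady state by linearizing in the $F$-direction, which is the natural next step in the bifurcation analysis.
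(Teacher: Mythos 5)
Your argument is correct and coincides with the paper's own proof: both reduce to $(F,G,B)$ via $A = 1-F-G-B$, observe that the $F$-row of the Jacobian at $\overline{F}=0$ has only its diagonal entry nonzero (every term of $\dot F$ carries a factor of $F$), invoke the previously established stability of the lower $2\times 2$ block, and conclude that stability is equivalent to $\varphi_G \overline{G} + \varphi_A \overline{A} - \beta_F \overline{B} - \Phi_F(\overline{G}) - \mu < 0$. No further comment is needed.
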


\begin{proof}
    We compute the linearization matrix for the full FGBA system and evaluate it at $F = 0$. We use $A = 1 - F - G - B$ to obtain a dynamical system in the three variables $F$, $G$, and $B$ which gives the matrix
    \begin{align*}
        &\textbf{J}(F = 0, G,B) \\
        &= 
        \begin{bmatrix} 
        \varphi_G G + \varphi_A A - \beta_F B -\Phi_F(G) - \mu & 0 & 0  \\
        -\gamma-\varphi_G G +\mu & -\gamma-\Phi_G'(G)G-\Phi_G(G)-\beta_G B & -\gamma - \beta_G G \\
        \Phi_F(G) + \beta_F B & \Phi_G'(G) G + \Phi_G(G) + \beta_G B & -q + \beta_G G
        \end{bmatrix}.
    \end{align*}
    The GBA steady state is stable to invasion by $F$ if all the eigenvalues of $\textbf{J}(\overline{F} = 0, \overline{G},\overline{B})$ have negative real parts. We have already shown that the eigenvalues of the $2 \times 2$ submatrix in the lower right have strictly negative real parts so the GBA steady state is stable if and only if the entry in the upper left is negative.
\end{proof}

An intuitive interpretation of equation~(\ref{eqn:GBA_stability}) is that the GBA equilibria is resistant to invasion by forest when the rate of tree mortality by fire spread through forest, spontaneous forest fires, and natural mortality (left-hand side) exceeds the rate of forest seeding into the steady state grass and ash land (right-hand side). Since a unique GBA steady state exists for any value of the system parameters, but its stability can vary based on the parameter values, we expect to find transcritical bifurcations within the parameter space when equality occurs in equation ~(\ref{eqn:GBA_stability}). In other words, a stable GBA steady state bifurcates into an unstable no-forest state and a stable forest state.

Some example bifurcation diagrams in $\beta_G$, $\beta_F$, $\varphi = \varphi_G = \varphi_A$ and $\mu$ using partially timescale-separated parameter values are given in Fig.~(\ref{fig:some_time_sep_one_param_bif_FGBA}). As expected, when $\varphi$ is increased past a critical value or when $\beta_G$, $\beta_F$, and $\mu$ are decreased past a critical value, the proportion of forest sites within $\Omega$ becomes non-zero. However, the parameter values produce a mostly ash grassland steady state which is not very ecologically realistic. Nonetheless, it will later be shown that the spatial FGBA model has grassy steady states distinct from the ash-dominated, spatially homogeneous GBA steady state.


\subsubsection{Estimates of the GBA Steady State}
We will now show that using our estimates of the parameter values, the steady state $(\overline{G}, \overline{B}, \overline{A})$ can be easily estimated. 
Notice that $\mathcal{F}(G)$ becomes asymptotic to $\frac{\gamma q}{\gamma + q} (\frac{1}{G} - \frac{\beta_G}{q})$ as $G \rightarrow 0$ and the coefficient $\frac{\gamma q}{\gamma + q}$ is of order $2$. Since $\Phi_G(G)$ is of order $0$ or lower for $G \in [0,1]$ then we can approximate the condition $\Phi_G(G) = \mathcal{F}(G)$ as $0 = \mathcal{F}(G)$ which gives $\overline{G} \approx \frac{q}{\beta_G}$. It follows that to increase the portion of land covered in grass, we must either increase $q$, i.e. increasing the fire quenching rate, or decrease $\beta_G$, i.e. decreasing the rate of fire spread through grass, both of which make sense intuitively. By solving System~\ref{eqn:GB} and using $G+B+A=1$ we can compute an estimate of the GBA steady state as well as the ratio $\overline{B}/\overline{A}$:
\begin{align}
    (\overline{G}, \overline{B}, \overline{A}) \approx \bigg(\frac{q}{\beta_G}, \frac{\gamma(1-\frac{q}{\beta_G})}{q+\gamma}, \frac{q(1 - \frac{q}{\beta_G})}{q+\gamma}\bigg) \Longleftrightarrow \frac{\overline{B}}{\overline{A}} \approx \frac{\gamma}{q}.\label{eqn:GBA}
\end{align}
Based on the estimate, the ratio between the portions of land in burning and ash states is determined by $\gamma$, the rate of grass regrowth from ash, and $q$, the rate of fire quenching in a manner that agrees with intuition.

Then Eq.~(\ref{eqn:GBA}) can be used to give Eq.~(\ref{eqn:GBA_stability}) purely in terms of the model parameters:
\begin{align}
     \frac{\beta_F \gamma(1 - \frac{q}{\beta_G})}{q+\gamma}
     + \Phi_F\Big(\frac{q}{\beta_G}\Big)
     + \mu
     > \frac{q\varphi_G}{\beta_G}
     + \frac{q\varphi_A(1-\frac{q}{\beta_G})}{q+\gamma}.
     \nonumber
\end{align}
In particular, if we vary each of the parameters $\beta_F, \gamma, f_0, \mu,$ and $ \varphi = \varphi_A = \varphi_G$ one at a time and fix all other parameters at the estimates given in Table \ref{tab:values_time_sep}, we can compute the transcritical bifurcation points for each parameter. To simplify the calculations, we will treat $\Phi_F$ as a Heaviside step function, that is
\begin{align*}
    \Phi_F(x) = f_0 + (f_1-f_0)H(x - \theta), \quad \text{ where } \quad H(x) = \begin{cases} 1 & x \geq 0, \\ 0 & x < 0, \end{cases}
\end{align*}
which can be viewed as $\Phi_F$ in the limit $s_F \rightarrow 0$.

\subsection{FGBA Model Steady States}
We next solve for the FGBA steady states. We first use the relation $A = 1 - F -G - B$ to eliminate the ash land cover fraction from  (\ref{eqn:mf}), which gives the following system of equations:
\begin{align}
    \frac{d}{dt}F(t) &= (1-F-B) \varphi F -\beta_F B F -\Phi_F(G)F- \mu F \label{eqn:FGBA_F}\\
    \frac{d}{dt}G(t) &= \gamma (1-F-G-B) +\mu F- \varphi FG -\Phi_G(G) G - \beta_GBG  \label{eqn:FGBA_G}\\
    \frac{d}{dt} B(t) &=\Phi_G(G) G + \Phi_F(G)F + (\beta_G G + \beta_F F)B -q B \label{eqn:FGBA_B}.
\end{align}
From this point forward, we will set $\varphi = \varphi_A = \varphi_G$ to simplify calculations. This is an ecologically reasonable assumption since the rate of forest spread into ash should not differ substantially from the rate of forest spread into grassland. Note that a similar assumption was made in~\cite{wuyts2023emergent}. Then after setting $\dot{G}= \dot{B}= \dot{A} = 0$ we can solve for roots $(\overline{F}, \overline{G}, \overline{B}, \overline{A})$. As for the GBA subspace, the full model retains forward invariance of the ecologically relevant region of the phase-space (proof deferred to Appendix \ref{secA1}).

\begin{proposition}\label{prop.fwd.FGBA}
For any ecologically relevant initial value (i.e. $F(0) \geq 0$, $G(0) \geq 0$, $B(0) \geq 0$, $F(0) + G(0) + B(0) \leq 1$), the solution $(F(t), G(t), B(t))$ remains ecologically relevant for all times $t \in \mathbb{R}$.
\end{proposition}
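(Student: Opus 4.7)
The plan is to show that the closed simplex
\[
\Delta := \{(F,G,B)\in \mathbb{R}^3 : F,G,B\ge 0,\; F+G+B\le 1\}
\]
is positively invariant for the ODE system \eqref{eqn:FGBA_F}--\eqref{eqn:FGBA_B}, by verifying that the vector field is sub-tangential to each of the four boundary faces of $\Delta$. Since the right-hand side is $\mathcal{C}^\infty$ in $(F,G,B)$ (the functions $\Phi_F,\Phi_G$ are smooth sigmoids), local existence and uniqueness of the solution are standard, and the invariance check together with Nagumo's theorem (or an elementary tangency argument, as used in the proof of the analogous GBA proposition) yields the result.

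I would handle the four faces in turn. First, on the face $\{F=0\}$: the right-hand side of \eqref{eqn:FGBA_F} factors as $F\cdot h(F,G,B)$ for a smooth function $h$, so $\dot F\big|_{F=0}=0$ and the hyperplane $\{F=0\}$ is itself invariant, ensuring $F(t)\ge 0$ for initial data with $F(0)\ge 0$. Second, on the face $\{G=0\}$ with $A=1-F-B\ge 0$, equation \eqref{eqn:FGBA_G} gives
\[
\dot G\big|_{G=0} \;=\; \gamma\,(1-F-B) + \mu F \;\ge\; 0,
\]
since both terms are non-negative. Third, on the face $\{B=0\}$, equation \eqref{eqn:FGBA_B} gives
\[
\dot B\big|_{B=0} \;=\; \Phi_G(G)\,G + \Phi_F(G)\,F \;\ge\; 0,
\]
because $\Phi_F,\Phi_G\ge 0$ and $F,G\ge 0$.

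The fourth face is the hypotenuse $\{F+G+B=1\}$, i.e.\ $A=0$, and this is the only place where the algebra is not immediate. Summing \eqref{eqn:FGBA_F}--\eqref{eqn:FGBA_B} and collecting terms, the cancellations between the $\mu F$, $\Phi_F(G)F$, $\Phi_G(G)G$, $\beta_F BF$ and $\beta_G BG$ pairs leave
\[
\dot F + \dot G + \dot B \;=\; \varphi F\,(1-F-G-B) - qB \;=\; \varphi F\,A - qB,
\]
so on the face $A=0$ we obtain $\tfrac{d}{dt}(F+G+B) = -qB \le 0$. Hence the outward normal component of the vector field is non-positive on this face, preventing $A$ from becoming negative.

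With the sub-tangential condition verified on each face of $\partial\Delta$, Nagumo's invariance theorem gives forward invariance of $\Delta$; uniqueness of solutions together with the strict invariance of $\{F=0\}$ also rules out trajectories exiting through edges or vertices. I expect the main obstacle to be purely bookkeeping in the fourth step: one must be careful to cancel the $\Phi_F(G)F$, $\Phi_G(G)G$, $\mu F$, $\beta_F FB$ and $\beta_G GB$ contributions correctly and to recognise the clean residual $\varphi FA - qB$. Regarding the statement "for all $t\in\mathbb{R}$", forward invariance is the substantive content; the proof just given also shows that solutions in $\Delta$ cannot blow up in finite time (since $\Delta$ is compact), so the forward solution is global. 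The phrasing can be read as asserting global existence in forward time for ecologically relevant initial data, which is what the invariance argument delivers.
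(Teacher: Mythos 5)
Your proof is correct and follows essentially the same route as the paper's: checking sub-tangency of the vector field on each of the four faces of the tetrahedron $\{F,G,B\ge 0,\ F+G+B\le 1\}$, with identical computations on each face. One minor note: the full identity for the sum is $\dot F+\dot G+\dot B=\varphi F A+\gamma A-qB$ (you dropped the $\gamma A$ term coming from the grass-regrowth term in the $\dot G$ equation), but since you only evaluate it on the face $A=0$ this does not affect the conclusion.
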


We can now investigate the transcritical bifurcation anticipated to occur at the transition from the GBA steady state to states with non-zero forest. Let $\overline{B}(F)$ be a function that outputs the equilibrium burning proportion(s) $B$ for any given input $F$. The function $\overline{B}(F)$ can be obtained from Eqs.~(\ref{eqn:FGBA_F}) to~(\ref{eqn:FGBA_B}) by solving for $B$ in terms of $F$ after setting $\dot{G}= \dot{B}= 0$ and enforcing $F + G+ B+ A = 1$. In general, $\overline{B}(\cdot)$ is a complicated function we will analyse later. The vector field for forest cover can be expressed as
\begin{align*}
    Q(F) &= (1 - F - \overline{B}(F))F \varphi - \beta_F \overline{B}(F)F - (f_1 + \mu)F \\
    &= -\overline{B}(F) F (\varphi+\beta_F) - \varphi F^2 + F(\varphi -\tilde{\mu}) 
\end{align*}
where we introduce the parameter $\tilde{\mu} := \mu + f_1$ to reduce notational clutter. The normal form for a transcritical bifurcation is $\dot{x}=a_1x+x^2$. To verify that the genericity conditions for a transcritical bifurcation hold, we note first that $Q(F = 0) = 0$ as expected since $F=0$ is an equilibrium. Furthermore,
\begin{align*}
    Q'(F = 0) &= -\overline{B}(0)(\varphi + \beta_F) + \varphi - \tilde{\mu}
\end{align*}
which indicates the stability of the $F = 0$ equilibrium (i.e. the $GBA$ equilibrium). For instance, note that the inequality $Q'(F) < 0$ is exactly the same as Eq.~(\ref{eqn:GBA_stability}) after making the simplifications $\varphi = \varphi_G = \varphi_A$ and $\Phi_F(\overline{G}) = f_1$. Lastly, we have
\begin{align*}
    Q''(0) &= - 2\overline{B}'(0)(\varphi+\beta_F) -2\varphi
\end{align*}
and it follows that as long as $Q''(0) \neq 0$ a non-degenerate transcritical bifurcation occurs at parameter values where $Q'(F=0)$ i.e. when
\begin{align}
    \overline{B}(0) &= \frac{\varphi - \tilde{\mu}}{\varphi+ \beta_F} 
\end{align}
Next, we know that after passing the transcritical bifurcation, the system displays two distinct equilibria, one stable and the other unstable. Since $F=0$ is always an equilibrium, it remains to be determined whether or not the second equilibrium is ecologically plausible, i.e. contained in $\mathcal{F}_{FGBA}$. To do this, we first note that the nonzero forest equilibrium can be calculated from Eq.~(\ref{eqn:FGBA_F}) by dividing out the factor of $F$ and setting $\dot{F} = 0$. This gives an implicit equation
\begin{align*}
    \overline{F}(\overline{B}) &= \frac{(\varphi-\tilde{\mu}) - (\varphi + \beta_F)\overline{B}(\overline{F})}{\varphi}.
\end{align*}
Next we can expand about the equilibria by writing $F = 0 \rightarrow \delta F$ while $\varphi \rightarrow \varphi + \delta \varphi$ or $\beta_F \rightarrow \beta_F + \delta \beta_F$ or $\tilde{\mu} \rightarrow \tilde{\mu}+ \delta \tilde{\mu}$. Then, after keeping only first-order terms, we find that
\begin{align*}
    \bigg[\frac{\varphi}{\varphi + \beta_F} + \overline{B}'(0)\bigg] \delta F = \bigg[ \frac{\tilde{\mu} + \beta_F}{(\varphi + \beta_F)^2}\bigg] \delta \varphi = \bigg[ \frac{\tilde{\mu} - \varphi}{(\varphi + \beta_F)^2}\bigg] \delta \beta_F = \bigg[\frac{-1}{(\varphi + \beta_F)^2}\bigg] \delta \tilde{\mu}.
\end{align*}
Then, the new equilibria emerging at the transcritical point after perturbation of a parameter value is ecologically plausible exactly when $\delta F > 0$. Thus, whether or not the non-zero forest equilibria are ecologically plausible can be determined based on the parameter values $\tilde{\mu}, \beta_F, \varphi$ as well as $\overline{B}'(0)$. 

Next, after some calculations (see Appendix \ref{secA2} for details), we find that the parameter values along which the transcritical bifurcation occurs can be determined to various corrections in orders of $g_1$:
\begin{align}
    \frac{\varphi - \tilde{\mu}}{\varphi+ \beta_F}  = \frac{\gamma (\beta_G - q)}{\beta_G(q+\gamma)} + \frac{q g_1}{\beta_G(\beta_G - q)} + \frac{q(q+\gamma)g_1^2}{(q-\beta_G)^3 \gamma} + \mathcal{O}(g_1^3)
    \label{eqn:BP}
\end{align}
A plot of numerically computed branch points and Eq.~(\ref{eqn:BP}) plotted to various orders in $g_0$ and $f_0$ is given in Fig.~(\ref{fig:BP_approx}). 

\begin{figure}[ht] 
\centering  
\includegraphics[width=0.95\linewidth]{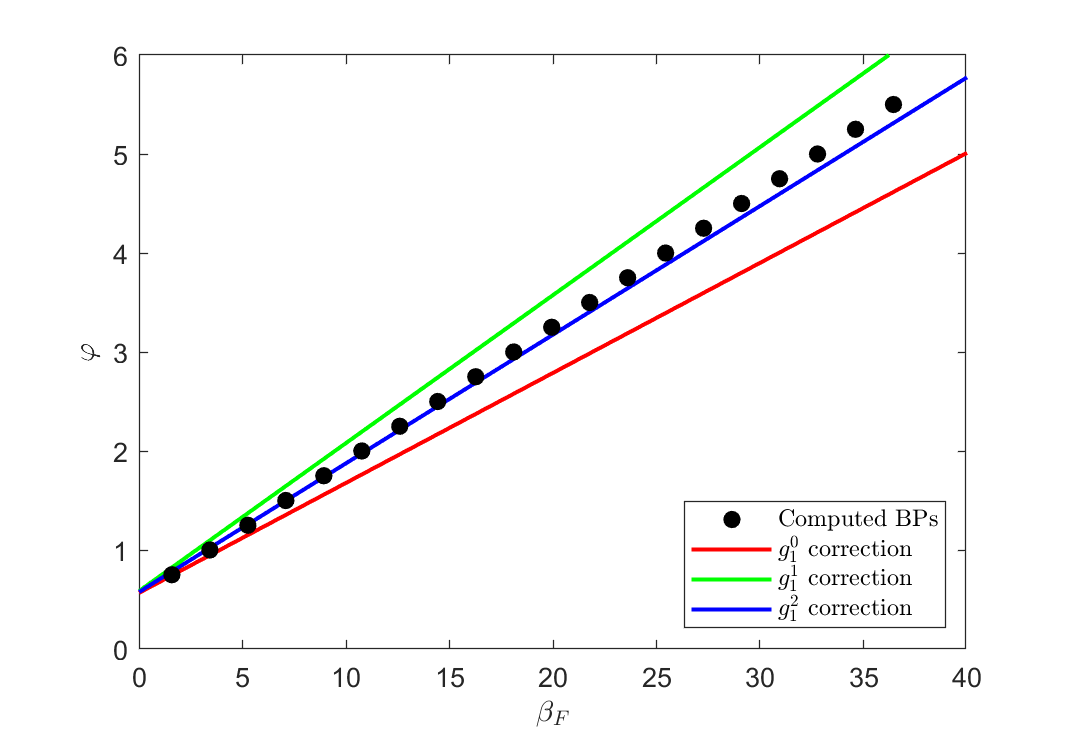}
\caption{A plot of the computed branch points governing the stable to unstable transition for the GBA steady state alongside the analytically predicted branch points in Eq.~(\ref{eqn:BP}). Branch points are plotted as a function of $\beta_F$, the rate of fire spread over forest, and $\varphi$, the rate of forest spreading over grass and ash (assuming $\varphi = \varphi_A = \varphi_G$). Parameter values: $\beta_G = 50$, $\beta_F = 10$, $q = 30$, $\gamma = 10$, $\varphi = 0.1$, $f_0 = g_0 = 0.01$, $f_1 = 0.5$, $g_1 = 1$}
\label{fig:BP_approx}
\end{figure} 

\subsubsection{Bifurcation diagrams (partially timescale separated)}
As the previous section shows, analyzing the FGBA system with a non-zero forest is complex. As such, a numerical study of the system is helpful for a more general understanding of the system's behavior. Unfortunately, numerical computations with timescale-separated parameter values is extremely slow. Thus, we performed a numerical analysis with somewhat less timescale-separated parameter values: $\beta_G = 50$, $\beta_F = 10$, $q = 30$, $\gamma = 10$, $\varphi = 1$, $\mu = 0.01$, $f_0 = g_0 = 0.01$, $f_1 = 0.5$, and $g_1 = 1$.

\begin{figure}[ht] 
\centering  
\includegraphics[width=0.99\linewidth]{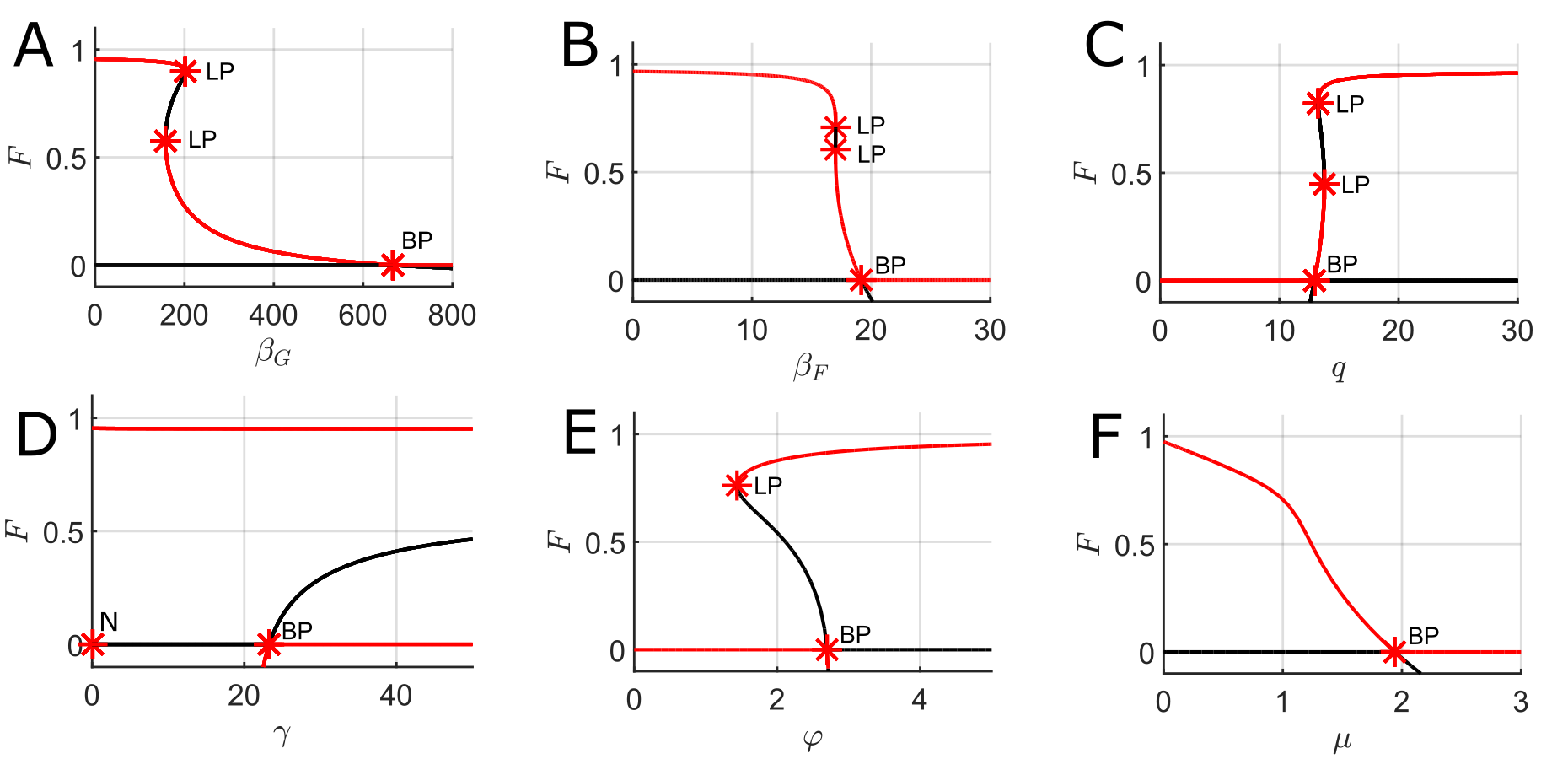}
\caption{One parameter bifurcation diagrams for partially timescale separated parameter values. Red lines indicate stable equilibria, while black lines indicate unstable equilibria. LP indicates a saddle-node bifurcation point (limit point), BP indicates a transcritical bifurcation point (branch point), and N indicates a neutral saddle equilibrium with no stability change.}
\label{fig:some_time_sep_one_param_bif}
\end{figure} 

\begin{figure}[ht] 
\centering  
\includegraphics[width=0.99\linewidth]{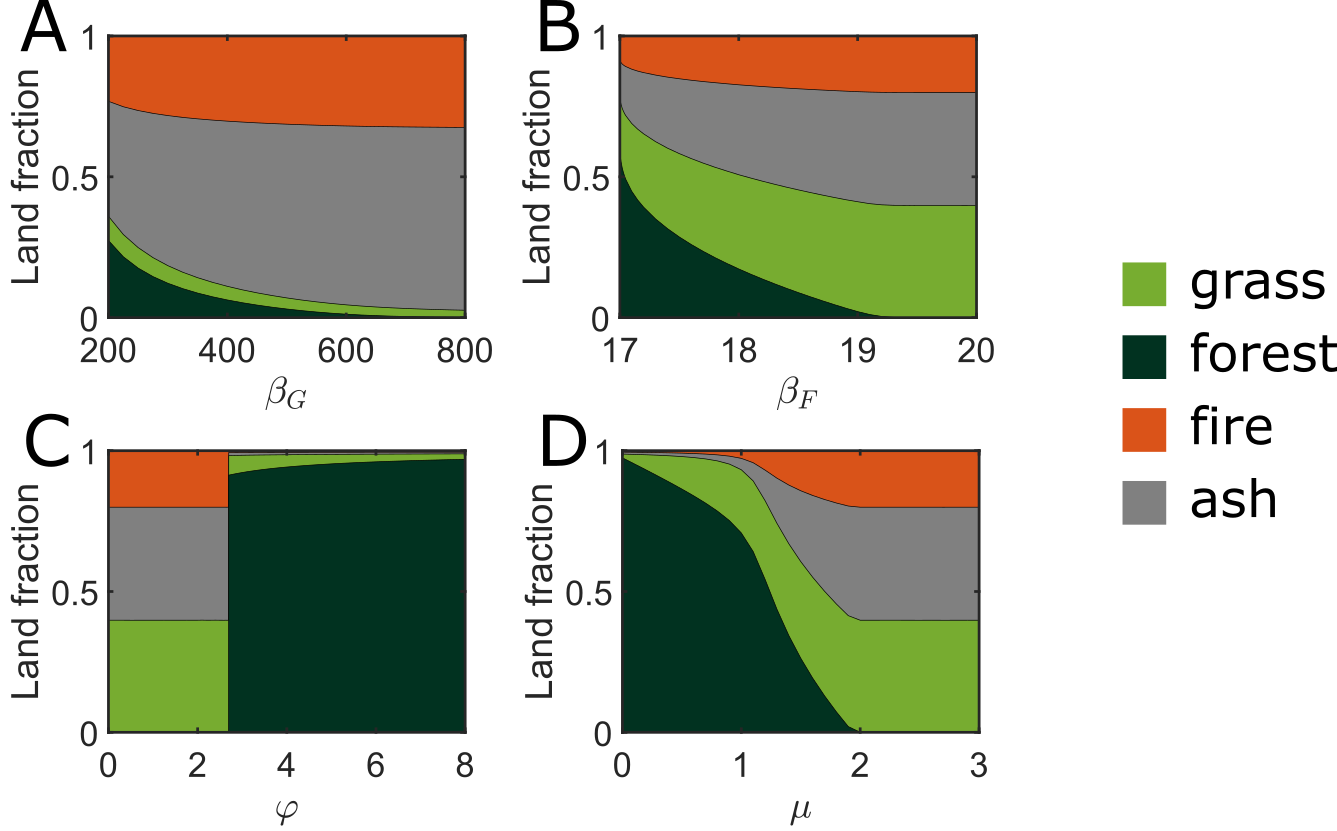}
\caption{Equilibrium land state proportions reached starting from a low forest state ($F=0.01$ initial condition) for partially timescale separated parameter values near branch points.}
\label{fig:some_time_sep_one_param_bif_FGBA}
\end{figure} 

\begin{figure}[h] 
\centering  
\includegraphics[width=0.99\linewidth]{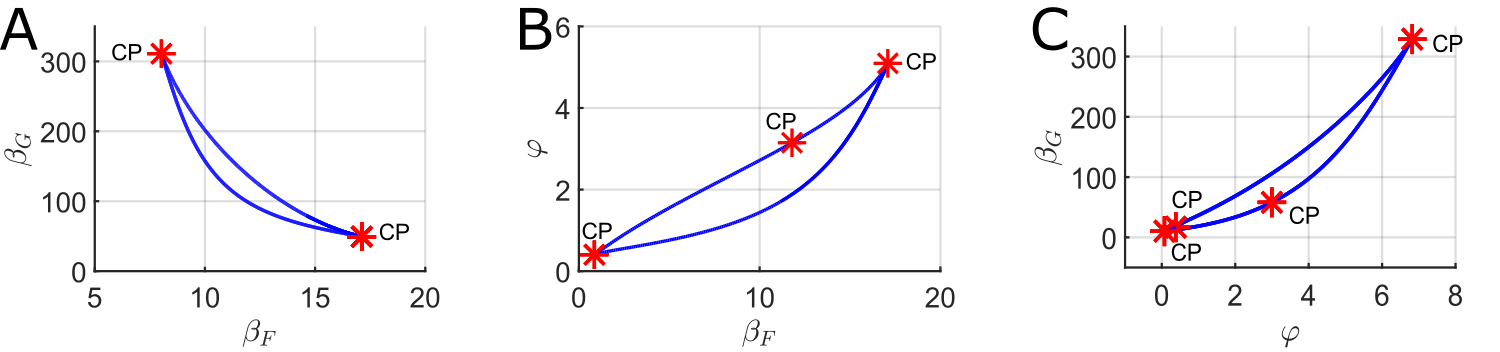}
\caption{Two parameter bifurcation diagrams of the limit points in Fig.~(\ref{fig:some_time_sep_one_param_bif}) for partially timescale separated parameter values. CP indicates a codimension two cusp point where two curves of saddle-node bifurcations intersect.}
\label{fig:some_time_sep_two_param_bif}
\end{figure} 

\begin{figure}[h] 
\centering  
\includegraphics[width=0.95\linewidth]{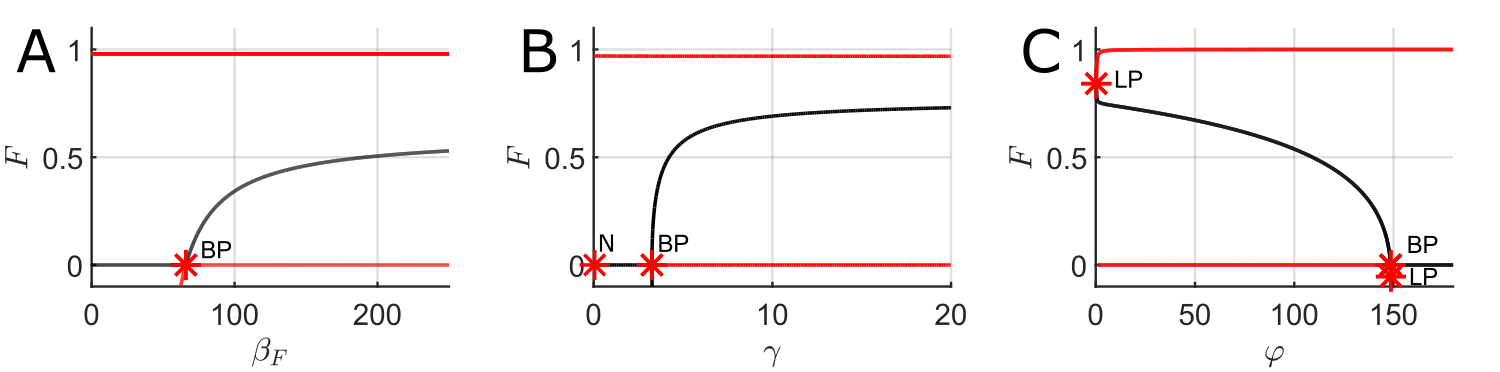}
\caption{One-parameter bifurcation diagrams in $\beta_F$ (fire spreading rate in forest), $\gamma$ (grass regrowth rate from ash) and $\phi$ (fire spreading rate in ash and grass) for fully timescale-separated parameter values. Red lines indicate stable equilibria while black lines indicate unstable equilibria. LP indicates a limit point, BP indicates a branch point, and N indicates a neutral saddle equilibrium.}
\label{fig:full_time_sep_one_param_bif}
\end{figure} 

\begin{figure}[h] 
\centering  
\includegraphics[width=0.95\linewidth]{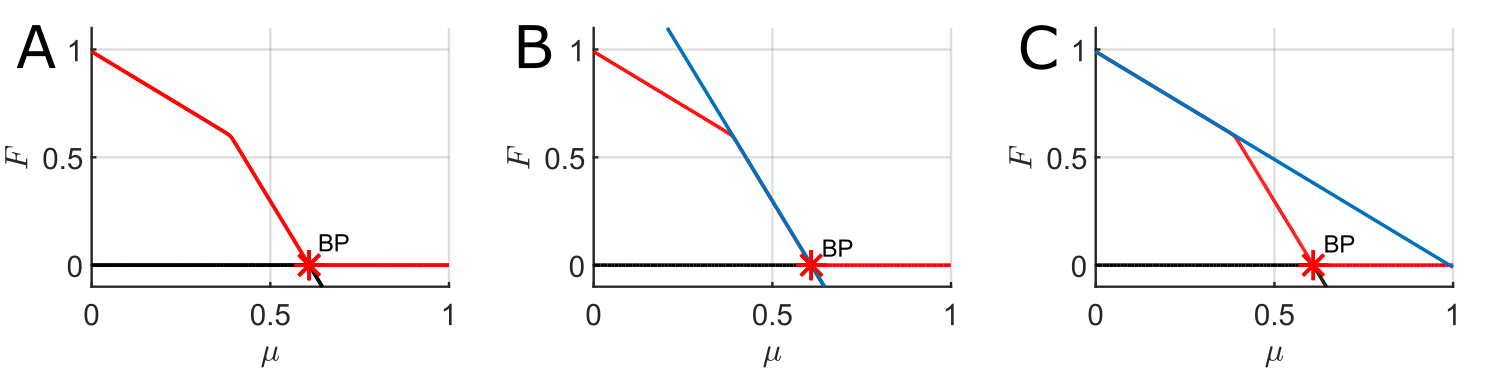}
\caption{A. Bifurcations in $\mu$ (non-fire forest mortality rate) for fully timescale separated parameter values with $\beta_F = 25$. Red lines indicate stable equilibria while black lines indicate unstable equilibria. BP indicates a branch point. The linear regimes near $F=0$ and $F=1$ are plotted in blue in panels B and C, respectively. }
\label{fig:mu_lines}
\end{figure}

As expected, the zero forest cover state is stable at low values of $\varphi$ and high values of $\beta_F$. Notably, a stable high forest state is also present in the same parameter ranges as the low forest states. Further, as expected, a transcritical bifurcation is present at the critical parameter values of both $\varphi$ and $\beta_F$ where the GBA steady state becomes unstable. In the two-parameter bifurcation diagram in Fig.~(\ref{fig:some_time_sep_two_param_bif}), the regions enclosed by the blue saddle-node curves indicate the parameter regimes in which we can expect bistability in forest cover. 

\subsubsection{Bifurcation diagrams (timescale separated)} 
Several one-parameter bifurcation diagrams were plotted using fully timescale-separated parameter values using MatCont~\cite{dhooge2008new}. The values used were $\beta_G = 50000$, $\beta_F = 10000$, $q = 20000$, $\gamma = 500$, $\varphi = 1$, $\mu = 0.01$, $f_0 = g_0 = 0.01$, $f_1 = 0.05$, and $g_1 = 0.1$ for the bifurcation diagrams in Fig.~(\ref{fig:full_time_sep_one_param_bif}). Due to prohibitively slow computation time, two-parameter bifurcation diagrams were not computed for fully timescale-separated parameter values.

The same parameter values were used for the $\mu$ bifurcation diagrams in Fig.~(\ref{fig:mu_lines}) except that $\beta_F = 25$ was set to ensure that the $\mu$ branch point occurs at a reasonable positive value. The $\mu$ bifurcation notably has two highly linear regimes which can be computed to high accuracy as demonstrated in Appendix~(\ref{sec:mu_lines}).

\section{Dynamics of the spatial stochastic FGBA model}\label{sec.spatial}
\subsection{Simulation algorithms and codes}
The spatial FGBA model was simulated in MATLAB using the Gillespie algorithm~\cite{gillespie1977exact}. All code used in this project can be accessed at \url{https://github.com/patterd2/vegetation_fire_models}. Testing and validation of the simulation codes is outlined in Appendix \ref{app.testing}.

\subsection{Grassland without forest}
Unless otherwise noted, all simulations in this section and the next section were run with $L = 1$ and $N = 2000$. Parameter values were identical to those used in the timescale-separated bifurcation diagram analysis, with the widths of all spreading kernels set to $\sigma_F=\sigma_B=\sigma_G=0.05$. The mean-field analysis predicts that the GBA model will exhibit a single ash-dominated stable fixed point. On longer timescales, we typically observe fluctuations about the mean-field dynamics in most parameter regimes, as expected, but the stochastic spatial model displays more complex dynamics on shorter timescales, and hence we focus on this case. Transient dynamics play a crucial role in ecological applications by capturing short- to medium-term ecosystem responses that differ from long-term equilibria, especially following disturbances or environmental change~\cite{hastings2001transient,hastings2018transient}. In the context of forest–savanna ecosystems, these dynamics may help to explain shifts in vegetation structure, species composition, and fire regimes observed in remotely sensed data on shorter timescales.

Using timescale-separated parameter values enables the GBA simulation to exhibit various behaviors that can only be observed appropriately on vastly differing timescales. For example, the dynamics of the cover proportions of the different states at a fire scale vs a grass time scale are shown in Fig.~(\ref{fig:fire_grass_timescale})AB. Fig.~(\ref{fig:fire_grass_timescale})A shows the profile of a single ignition event over a day, while Fig.~(\ref{fig:fire_grass_timescale})B shows multiple small fires and grass regrowth phases taking place over several months. Fig.~(\ref{fig:fire_grass_timescale})C shows a montage of the spatial progression of a typical fire ignition and spreading event occurring over a very short time (several hours). 

\begin{figure}[h] 
\centering  
\includegraphics[width=0.99\linewidth]{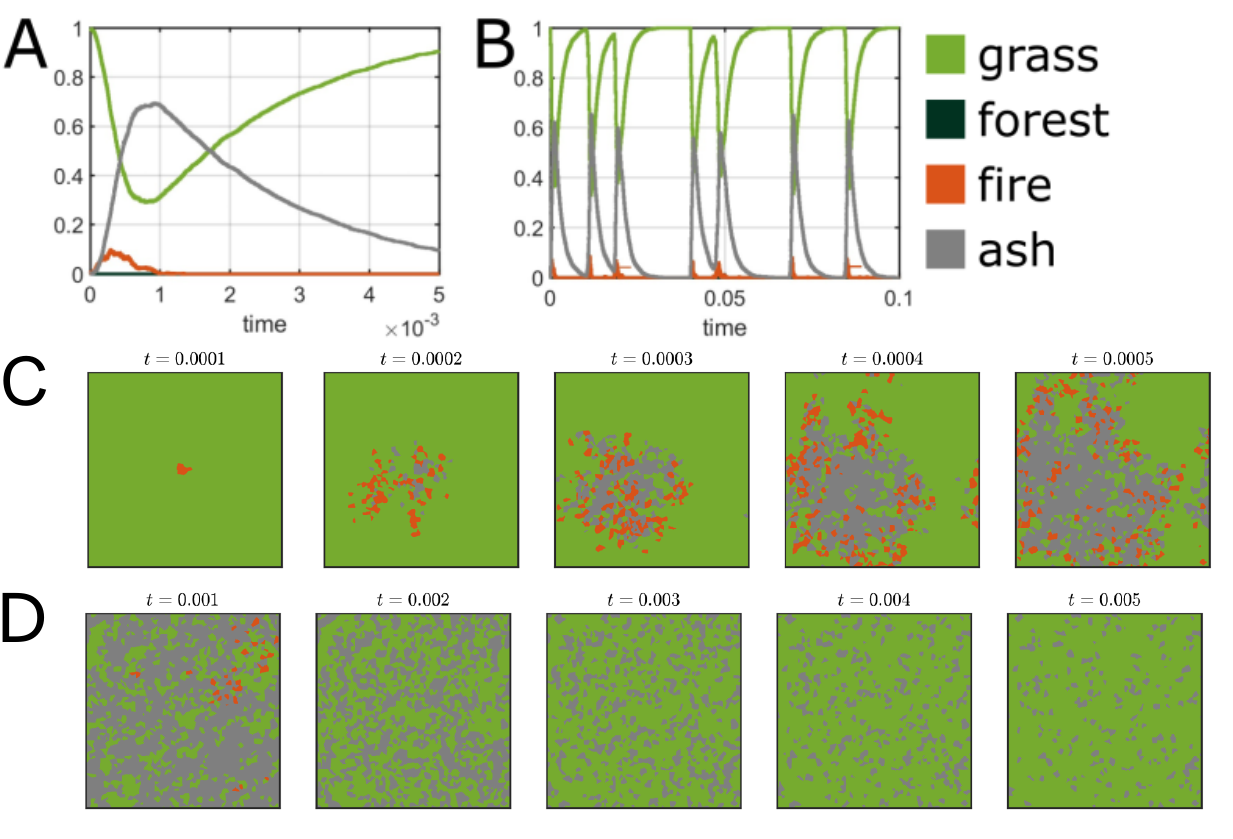}
\caption{A. System dynamics on the fire timescale with the proportions of space covered by each state on the y-axis. Note that a slight increase in burning area (fire ignition) is followed by a large increase in ash and a large decrease in grass. The grass cover then gradually recovers from the ash state. B. System dynamics on a grass timescale showing repeated big fire events followed by grass recovery. C. Montage of fire ignition and spreading in grassland on a fire timescale as shown in A. D. Montage of grass regrowth following a fire ignition event in grassland on a grass timescale as shown in A.}
\label{fig:fire_grass_timescale}
\end{figure} 

On the fire timescale, the spatial structure of a fire front spreading through space and leaving behind a region of ash is readily apparent. On the grass timescale, the fire dynamics are obscured, but the regrowth of grass following fire spreading events is easily observed. Compared to the fire spreading, grass regrowth occurs relatively homogeneously spatially as shown in Fig.~(\ref{fig:fire_grass_timescale})D.

\begin{figure}[h] 
\centering  
\includegraphics[width=0.99\linewidth]{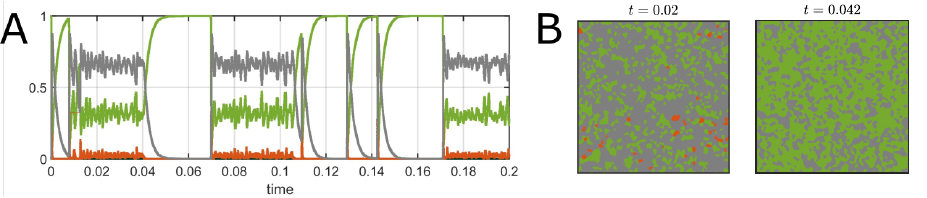}
\caption{A. Example of the system switching between large fires and mean-field-like behavior (i.e. fluctuations about an ash-dominated state). The y-axis is the proportion of space covered by each state. B. Snapshot of the spatial system in the stochastic analogue of the ash-dominated mean-field GBA steady state at time $t=0.02$ and a snapshot of the system in the grass-dominated quasi-steady state at time $t=0.042$. The color legend used is shown in Fig.~(\ref{fig:fire_grass_timescale}).}
\label{fig:mean_field_GBA}
\end{figure} 

In Fig.~(\ref{fig:fire_grass_timescale})B, several ignition events occur well separated in time on the grass timescale. This suggests that, in this parameter regime, the spatial GBA model possesses a largely grass quasi-steady state, distinct from the mean-field steady state, consisting of mostly ash. This is an especially interesting example of long-lived transient dynamics maintaining the system away from the mean-field prediction for extended periods. The spatial model can also demonstrate the mean-field grass steady state under certain conditions. For example, the spatial model can reach the mean-field GBA steady state for certain parameter values, which allows a uniform mixing of the states across the land area. For example, setting the fire quenching rate $q$ to $20,000$ instead of $15,000$ produces the behavior shown in Fig.~(\ref{fig:mean_field_GBA})A, with the system experiencing stochastic fluctuations that cause it to switch between regimes of big fire spreading events with spatial structure and mean-field behavior. The analogue of the mean-field GBA steady state is shown in Fig.~(\ref{fig:mean_field_GBA})B for $t=0.02$, while Fig.~(\ref{fig:mean_field_GBA})B with $t=0.042$ shows a grass-dominated quasi-steady state produced by transient and spatial dynamics not present in the mean-field model. A detailed investigation of the nature of these transient phenomena is beyond the scope of the present work, but it has been noted that high-dimensionality and spatial extent can promote long transients~\cite{hastings2018transient}, as observed here. In particular, the grass-dominated quasi-steady state in Fig.~(\ref{fig:mean_field_GBA})B takes time to build up a sufficiently dense and continuous layer of flammable cover to facilitate the large fire required to initiate the transition back to the ash-dominated state predicted by the mean-field. We hypothesize that this mechanism plays a key role in determining the duration of these extended transients.

\subsection{Grassland with Forest}
When studying grassland containing forest trees, we find additional new behavior at the forest timescale. In general, ignition events in grass cause large fires that propagate to an extent determined by the surrounding distribution of forest and destroy trees near the perimeter of the forest regions. During periods between fire events, the trees steadily regrow.

In general, there are two distinct outcomes over longer time periods. In the first scenario, the forest trees grow faster than the transitions that destroy them and eventually become so dense that fires can no longer propagate; thus, the high forest state is stably maintained. Alternatively, the grass fires are sufficiently destructive to reduce the forest proportion to a vanishing or near-vanishing proportion, where the forest can no longer spread noticeably. These two cases are illustrated in Fig.~(\ref {fig:forest_outcomes}) and confirm that our model produces the desired forest-savanna bistability in its complete spatial stochastic form. As forest cover decreases, fires become larger and more frequent. 

\begin{figure}[ht] 
\centering  
\includegraphics[width=1\linewidth]{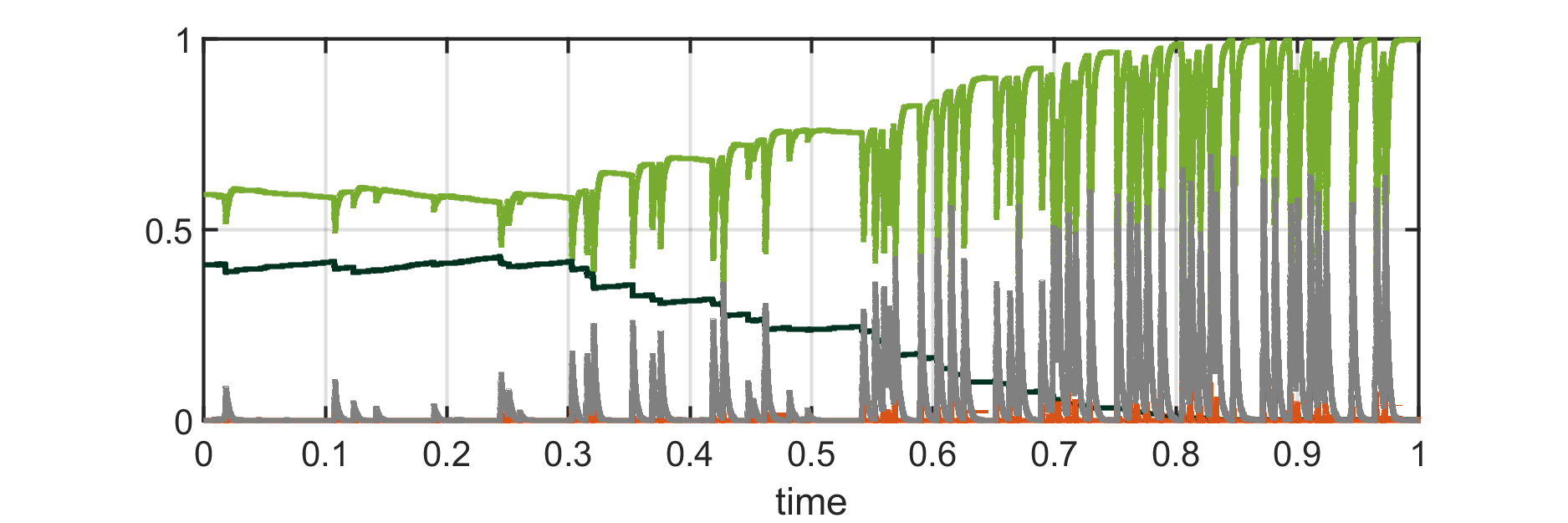}
\includegraphics[width=1\linewidth]{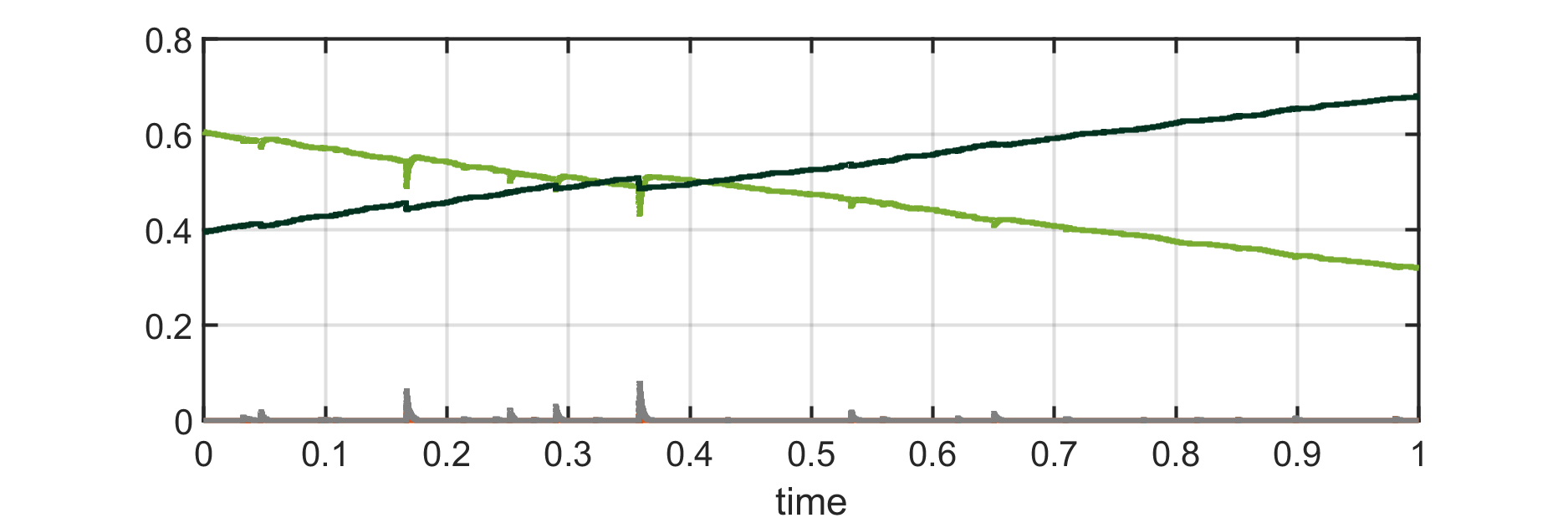}
\caption{Two different possible outcomes for forest (dark green line)  in the spatial-stochastic FGBA model; the proportion of space covered by each state is on the y-axis. Both simulations were run with the same parameter values and initial conditions. Forest was randomly distributed and occupied $40\%$ of sites. The remaining sites were all grass. The color legend is shown in Fig.~(\ref{fig:fire_grass_timescale}).}
\label{fig:forest_outcomes}
\end{figure} 

Due to the stochastic nature of the model and the underlying bistable structure, we cannot determine the long-term cover proportions at the outset; we may even observe noise-induced switching~\cite{weinan2021applied} between grass and forest-dominated steady states, which cannot be understood in terms of the mean-field approximation. In simulations starting at an initial condition of high forest cover, fire size also shows a noticeable increase as the forest cover decreases. A montage demonstrating the ability of forest cover to limit the extent of fire spreading is given in Fig.~(\ref{fig:forest_fire_montage}), corresponding to the fire occurring at the start of the simulation.

\begin{figure}[H] 
\centering  
\includegraphics[width=1\linewidth]{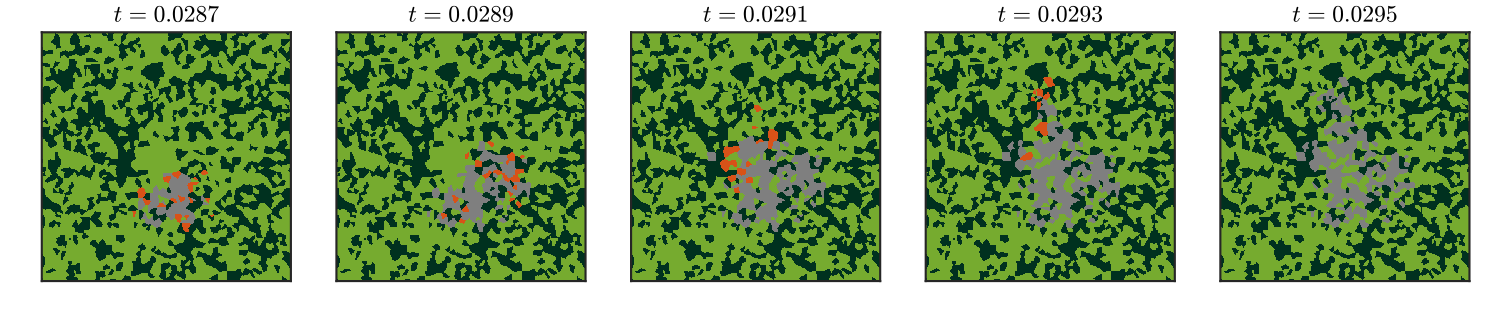}
\caption{A simulation demonstrating the ability of forest cover to limit the extent of fire spreading and specifically to stop smaller fires from growing into larger fires on the scale of the entire spatial domain (compare to Fig.~(\ref{fig:fire_grass_timescale})CD). The color legend is shown in Fig.~(\ref{fig:fire_grass_timescale}).}
\label{fig:forest_fire_montage}
\end{figure}

The top panel of Fig.~(\ref {fig:forest_outcomes}) highlights the dynamic nature of the open-canopy savanna-like state of the system, where frequent small to medium-scale fires are required to maintain the system state. Thus, our model in this setup could be used to distinguish between woody encroachment and natural fluctuations in the open canopy state. In contrast, we observe relatively small fluctuations in the forest-dominated state in the bottom panel of Fig.~(\ref {fig:forest_outcomes}) and hence our initial modeling suggests that empirically observed fluctuations in forest cover are likely to be of greater concern. It may be possible to quantify this effect with more detailed parameter fitting and comparison with empirical data. However, other factors such as rainfall seasonality, which will impact fire patterns, would need to be added to the modeling structure to see how they amplify or dampen the aforementioned fluctuations~\cite{accatino2013humid}.

\section{Conclusions}\label{sec.conclusion}

We have developed and analyzed a spatial stochastic model appropriate for studying fire-mediated forest-savanna bistability on short timescales where stochasticity and transient dynamics are expected to play crucial roles. Our work bridges the gap between highly resolved global vegetation models and minimalist models of forest-savanna bistability, such as the original Staver-Levin model. The mean-field (spatially implicit) version of our FGBA model provides a reasonably simple method to study the qualitative behavior of the system, but does not necessarily provide a full description of the dynamics of the spatial stochastic model. For instance, in the absence of forest trees, the stability of the grassland steady state under parameter regimes with sufficiently high forest mortality is observed in both the mean-field and spatial models. However, the mean-field model predicts only a single grassland steady state at unrealistically high ash coverage. In contrast, the stochastic spatial version of the model switches between a spatially homogeneous grassland steady state with high ash coverage and an additional quasi-steady state characterized by high grass coverage and occasional fire-spreading events well separated in time (see Fig.~(\ref{fig:fire_grass_timescale})). We hypothesize that these new transient dynamics are a product of the high-dimensional nature of the system and the ability of the spatial extent to block large fires until the grass layer has become sufficiently dense; these phenomena are beyond the scope of the mean-field model and hence require a separate, dedicated analysis to quantify fully. In a related example, the stability of the forest as a function of the parameter values showed qualitative agreement in both the mean-field and spatial models with increases in parameters such as $\varphi$ (the rate of fire spread over grass or ash), allowing the system to have a finite probability of reaching a high forest state. While the extant modeling literature has often relied on mean-field approximations, our work highlights the need to consider appropriately detailed models when estimating ecosystem resilience and stability from empirical data. In addition, simulations of our spatial FGBA model explicitly demonstrate the hypothesized mechanisms underpinning bistability in forest tree cover. In particular, occasional ignitions followed by rapid fire spreading can maintain low forest cover even at high forest spreading rates. Meanwhile, the inability of fire to spread in regions of dense tree cover maintains high forest cover even at high fire ignition and spread rates. 

Our proposed FGBA model is highly versatile and can be used to study a wide range of possible forest and grassland setups. For example, differences in soil quality could be modeled by choosing vegetation sites within $\Omega$ according to a non-uniform probability distribution. One could also model forest spread via heavy-tailed, non-Gaussian spreading kernels, which may be a more accurate model of forest spread than our current assumption that forest trees only spread locally~\cite{nathan2012dispersal}. One could also investigate the impact of non-spatially uniform forest distributions, for example, if the forest were distributed into distinct regions of high tree density separated by areas of low tree density or into shapes with varying perimeter-area ratios. An essential addition to consider in future work and developments of this framework will be rainfall seasonality and its associated consequences for fire ignition. Forest-savanna ecosystems are subject to yearly wet and dry seasons that significantly impact tree growth and the relative humidity of the ecosystem, hence making the system more or less fire-prone at certain times of the year~\cite{accatino2013humid}. This will likely alter the dynamics presented in section \ref{sec.spatial}, but we believe the present analysis is a crucial baseline against which to benchmark results incorporating seasonality to isolate the new effects introduced by seasonal forcing.

Our model aims to alleviate several limitations of similar mathematical models of forest-savanna ecosystems, while retaining a reasonable degree of mathematical tractability. In particular, our model is spatially explicit, models dynamics on all time scales (fire, grass, and forest), allows an arbitrary distribution of vegetation sites, and incorporates arbitrary spreading kernels, enabling the study of non-isotropic or long-distance spreading. However, to avoid excessive mathematical and computational complexity, our model has some limitations. For instance, our model considers only fire and vegetation spread as transition processes in this paper, although our overarching framework can incorporate other processes. Other models consider additional factors, such as herbivory, soil layer quality, and competition between species~\cite{abades2014fire,holdo2022sapling}, which could be added to our mathematical framework at the cost of additional complexity. Moreover, we could make our model more realistic by expanding the state-space to include different savanna tree types at different life stages, as in other related models~\cite{touboul2018complex}. The key role of transient dynamics and stochasticity will likely remain in such extensions. Still, it would undoubtedly be important to extend the present work to these more realistic settings to relate more closely to empirical observations and hence further our understanding of these precious and endangered ecosystems.

\section*{Statements and Declarations}

\subsection*{Acknowledgements} 

Denis Patterson and Simon Levin are grateful for the support of NSF DMS-1951358.

\subsection*{Competing Interests} The authors have no competing interests to declare.

\subsection*{Data Availability} The codes used in this paper are available at \url{https://github.com/patterd2/vegetation_fire_models}.

\noindent

\begin{appendices}

\section{Mathematical Proofs}\label{secA1}
\begin{proof}[Proof of Proposition (forward invariance of the GBA subspace)]
Equivalently $(G(t), B(T))$ remains contained in the closed triangle $\mathcal{T}_{GBA} = \{ (G,B) \in \mathbb{R}^2: G \geq 0, B \geq 0, G + B \leq 1\}$ at all times. This is easily shown by observing that the vector field $(\dot{G}, \dot{B})$ points towards the interior of $\mathcal{T}_1$ everywhere on $\partial \mathcal{T}_1$:
\begin{align*}
    \begin{cases}
        G = 0 & \Rightarrow  \dot{G} = \gamma(1- B) \geq 0\\
        B = 0 & \Rightarrow \dot{B} = \Phi_G(G)G \geq 0\\
        G + B = 1 & \Rightarrow \dot{G} + \dot{B} = -qB \leq 0.
    \end{cases}
\end{align*}
\end{proof}
\begin{proof}[Proof of Proposition \ref{prop_2} (existence and uniqueness of the GBA steady state)]
We define for convenience
\begin{align*}
    \mathcal{F}(G) \equiv (1-G)\frac{\gamma q}{\gamma + q} \bigg(\frac{1}{G}-\frac{\beta_G}{q}\bigg).
\end{align*}
Then $\overline{G}$ is a steady state if and only if $\mathcal{F}(\overline{G}) = \Phi_G(\overline{G})$. Let us first consider the case where $q < \beta_G$. Then in the interval $[0,1]$, $\mathcal{F}$ has exactly two roots located at $\frac{q}{\beta_G}$ and $1$. Furthermore, $\mathcal{F} > 0$ on $(0, \sqrt{q/\beta_G})$ and $\mathcal{F} < 0$ on $(\sqrt{q/\beta_G},1)$. Since $\Phi_G > 0$ on $(0,1]$ and $0 \leq \Phi_G(0) \leq \infty$ then any roots of $\mathcal{F} - \Phi_G$ in $[0,1]$ must occur in the interval $(0, q/\beta_G)$. Next note that the derivative of $\mathcal{F}$ is
\begin{align*}
    \mathcal{F}'(G) = \frac{\gamma q}{\gamma + q}\bigg(\frac{\beta_G}{q} - \frac{1}{G^2} \bigg)
\end{align*}
and has a single root in $[0,1]$ located at $\sqrt{q/\beta_G}$. Furthermore, $\mathcal{F}' < 0$ on $(0, \sqrt{q/\beta_G})$. Since $(0, q/\beta_G) \subset (0, \sqrt{q/\beta_G})$ it follows that $\mathcal{F}$ is strictly monotonically decreasing on $(0, q/\beta_G)$. Next since $\Phi_G$ is monotonically increasing on $(0, q/\beta_G)$ then $\mathcal{F} - \Phi_G$ is strictly monotonically decreasing on $(0, q/\beta_G)$. Since $\mathcal{F} - \Phi_G$ is continuous on $\mathbb{R}^+$ and $\lim_{G \rightarrow \infty}(\mathcal{F} - \Phi_G)(G) = +\infty$ while $(\mathcal{F}- \Phi_G)(q/\beta_G) = -\Phi_G(q/\beta_G) < 0$ then $\mathcal{F} - \Phi_G$ must have exactly one root in $(0, q/\beta_G)$. It follows that there is a unique steady state $\overline{G} \in [0,1]$. 

In the case where $q \geq \beta_G$ then $\mathcal{F}$ has exactly one root at $1$ in the interval $[0,1]$ and $\mathcal{F} > 0$ on $(0, 1)$. By the same argument as before $\mathcal{F}' < 0$ on $(0,\sqrt{q/\beta_G})$ so $\mathcal{F}$ is strictly monotonically decreasing on $(0,1)$. The rest of the argument is the same as the $q < \beta_G$ case with $q/\beta_G$ replaced by $1$. Thus, a unique GBA steady state always exists for any choice of parameters for the system. See Fig.~(\ref{fig:GBA}) for a graphical illustration. 

\begin{figure}[ht] 
\centering  
\includegraphics[width=0.95\linewidth]{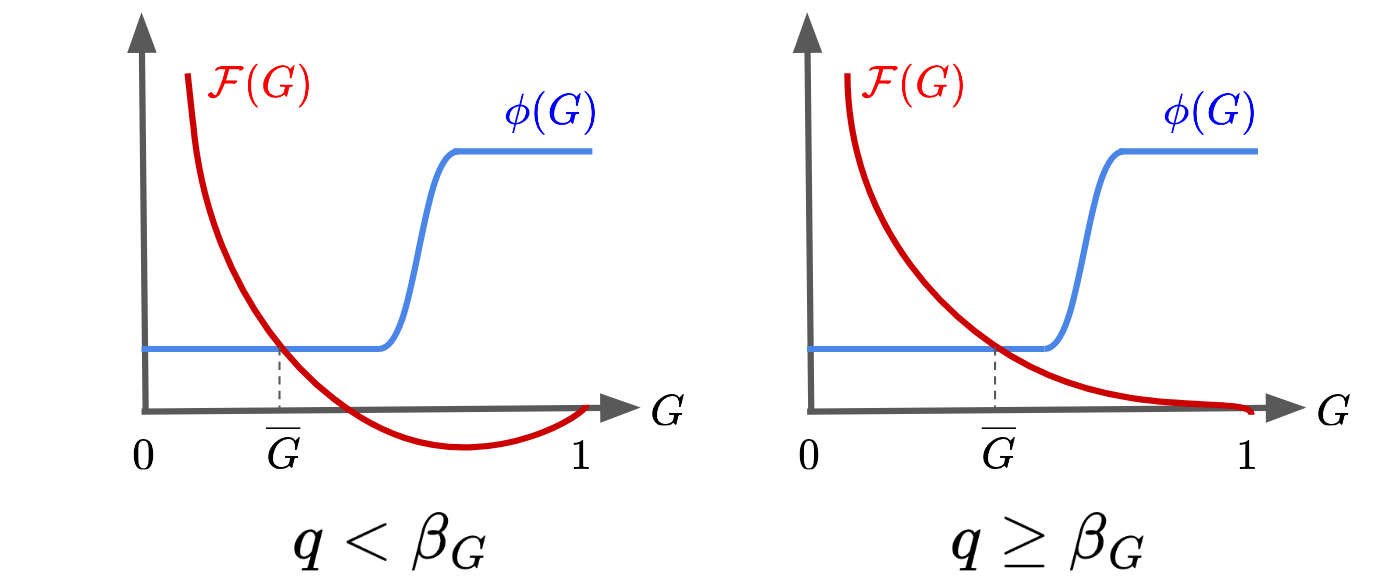}
\caption{Plots of $\mathcal{F}$ and $\Phi_G$ in the $q < \beta_G$ case (left) and the $q \geq \beta_G$ case (right)
\label{fig:GBA}} 
\end{figure}

Graphically, we note that $\overline{G}$ increases as $\frac{\gamma q}{\gamma + q}$ increases. Noting that $\partial_{\gamma}(\frac{\gamma q}{\gamma + q}) > 0$ and $\partial_{q}(\frac{\gamma q}{\gamma + q}) > 0$ when $\gamma, q > 0$ it follows that increasing either $\gamma$ or $q$ will increase the proportion of grass in the steady state. This is logical since increasing $\gamma$ allows faster regrowth of grass from ash, and increasing $q$ allows faster fire quenching into ash from which grass can regrow. Similarly, increasing $\frac{q}{\beta_G}$ also increases $\overline{G}$. This also follows intuition: the case of increasing $q$ has already been discussed, while decreasing $\beta_G$ reduces the rate at which grass burns due to fire spread. We also note that increasing $g_0$ or $g_1$ reduces $\overline{G}$ as expected since doing so increases the flammability of grass. Lastly, we note that the presence of the GBA steady state depends on the non-decreasing non-tonicity of $\Phi_G$. Thu,s if $\Phi_G$ were constant, the result of the system containing a single unique GBA steady state would still hold. 
\end{proof}

\begin{proof}[Proof of Proposition \ref{prop.fwd.FGBA} (forward invariance for the nonspatial FGBA model)]
    Analogously to the GBA system, we check for flow invariance of the tetrahedron $\mathcal{T}_{FGBA} = \{(F,G,B) \, | \, F \geq 0, G \geq 0, B \geq 0, F +G+B \leq 1\}$ to ensure that any trajectory starting at an ecologically relevant condition remains ecologically relevant at all times. 
    \begin{align*}
    \begin{cases}
        F = 0 & \Rightarrow \dot{F} =  0\\
        G = 0 & \Rightarrow \dot{G} = \gamma(1-F-B)+\mu F  \geq 0 \\
        B = 0 & \Rightarrow \dot{B} = \Phi_G(G)G + \Phi_F(G)F \geq 0  \\
        F +G+B = 1 & \Rightarrow \dot{F} +\dot{G} + \dot{B} = -qB \leq 0,  
    \end{cases}
\end{align*}
so $\mathcal{T}_{FGBA}$ is invariant, as desired.
\end{proof}

\section{Computation of equilibrium burning cover value}\label{secA2}
We now examine the function $\overline{B}(F)$, which generates an equilibrium burning cover value for any input forest value (including $F=0$). To find $\overline{B}(F)$, set the four equations in system~(\ref{eqn:mf}) to $0$, forming an essentially three-dimensional system. From the equation for $\dot{A}$ we obtain
\begin{align*}
    \overline{G}(F,B) &= 1 - F - \Big( \frac{\gamma + \varphi F + q}{\gamma + \varphi F}\Big) B.
\end{align*}
Plugging this into the equation for $\dot{B}$ then gives
\begin{align*}
    0 &= c_0(F) + c_1(F) B + c_2(F) B^2
\end{align*}
where 
\begin{align*}
    c_0(F) &= g_0(1-F) + f_0F \\
    c_1(F) &= \beta_G(1-F) - g_0 \Big( \frac{\gamma + \varphi F + q}{\gamma + \varphi F} \Big) + \beta_F F - q \\
    c_2(F) &= -\beta_G \Big( \frac{\gamma + \varphi F + q}{\gamma + \varphi F} \Big)
\end{align*}
where we work in the assumption that $F \approx 0$. A simple application of the quadratic formula then gives
\begin{align*}
    \overline{B}(F) &= \frac{-c_1(F) \pm \sqrt{c_1^2(F) - 4c_0(F)c_2(F)}}{2c_2(F)}.
\end{align*}
Next note that as $g_0 \rightarrow 0$ then
\begin{align*}
    \overline{B}(0) &= \frac{(\beta_G - q) \mp (\beta_G -q)}{2 \beta_G(\frac{\gamma + q}{\gamma})}
\end{align*}
where we assume that $\beta_G > q$. Comparing this expression to the GBA steady state with $g_0 \rightarrow 0$ in Eq.~(\ref{eqn:GBA}) then indicates that we must choose the $+$ sign. Next, using Mathematica we can expand $\overline{B}(F)$ in powers of $F$: 
\begin{align*}
    \overline{B}(F) &= d_0 + d_1 F + d_2 F^3 + \mathcal{O}(F^3)
\end{align*}
where
\begin{align*}
    d_0 &= \frac{(\beta_G - q) \gamma}{\beta_G(q+\gamma)} + \Big(\frac{q}{\beta_G(\beta_G-q)} \Big)g_0 + \mathcal{O}(g_0^2) \\
    d_1 &= \frac{(\beta_f - \beta_G)\gamma(q + \gamma) + q(\beta_G-q)\varphi}{\beta_G(q+\gamma)^2} + \Big(\frac{1}{\beta_g-q} \Big)f_0 + \Big(\frac{(q-\beta_F)(\beta_F - \beta_G)}{(q-\beta_G)^3} \Big)g_0 \\
    & \quad \quad \quad \quad \quad + \mathcal{O}(g_0^2, g_0 f_0, f_0^2) \\
    d_2 &= \frac{q\varphi((\beta_F-\beta_G)(q+\varphi) + (q-\beta_G)\varphi)}{\beta_G(q+\gamma)^3} + \mathcal{O}(g_0, f_0).
\end{align*}
Using our orders of magnitude approximation $\beta_G \approx q \gg \gamma \gg \varphi \approx 1$ and $\beta_G \gg \beta_F$ we find that 
\begin{align*}
    d_0 &\approx \frac{\gamma}{q} + \frac{1}{q} g_0, \quad 
    d_1 \approx \frac{\gamma}{q} + \frac{1}{q} (f_0 + g_0), \quad 
    d_2 \approx \frac{\varphi}{q},
\end{align*}
so $d_0 \approx d_1 \gg d_2$, and we can reasonably approximate
\begin{align*}
    \overline{B}(F) \approx d_0 + d_1 F,
\end{align*}
and it is also apparent that
\begin{align*}
    \overline{B}(0) = d_0, \quad 
    \overline{B}'(0) = d_1.
\end{align*}
We can perform a similar analysis for $F \approx 1$. In this case, we write
\begin{align*}
    \overline{B}(F) = \tilde{d}_0 + \tilde{d}_1(F - 1) + \mathcal{O}((F-1)^2)
\end{align*}
where
\begin{align*}
    \tilde{d}_0 &= \frac{f_0}{q-\beta_F}+ \mathcal{O}(g_1^2, g_1f_1, f_1^2), \\ 
    \tilde{d}_1 &= \Big( \frac{q-\beta_G}{(q-\beta_F)^2} \Big) f_1 + \Big(\frac{1}{\beta_f - q}\Big) g_1 + \mathcal{O}(g_1^2, g_1f_1, f_1^2).
\end{align*}

\section{Simulation Testing} \label{app.testing}
Below, we perform some sense checks on the model to ensure that the simulation codes are bug-free and behaves as expected. We test each transition rate separately by introducing artificial situations with readily predictable behaviour and comparing the simulation results with analytic predictions. 

\subsection{Testing non-spatial transitions}
The simplest transitions to test are the spontaneous, non-spatial transitions governed by the parameters $q$ ($B \rightarrow A$), $\gamma$ ($A \rightarrow G)$ and $\mu$ ($F \rightarrow G$) corresponding to fire quenching, grass regrowth, and tree mortality, respectively. All testing was performed with $L = 100$ and $N = 500$. 

To test the fire quench rate we began the simulation with all states in the burning state and removed grass regrowth by setting $\gamma = 0$. The times at which each burning state transitioned into the ash state were recorded at various values of $q$. The quench times are expected to follow a Poisson distribution with rate $q$, giving an expected quench time of $\frac{1}{q}$ with variance $\frac{1}{q}$. 

Similarly, to check the grass regrowth rate from ash we began with all sites in the ash state. We set $g_0 = g_1 = 0$ to remove spontaneous grass fires. The times at wich each ash site transitoned into a grass state was recorded for various values of $\gamma$. As expected, he regrowth times followed a Poisson distribution with rate $\gamma$ and expected regrowth time $\frac{1}{\gamma}$ with variance $\frac{1}{\gamma}$. 

\begin{figure}[ht] 
\centering  
\includegraphics[width=0.32\linewidth]{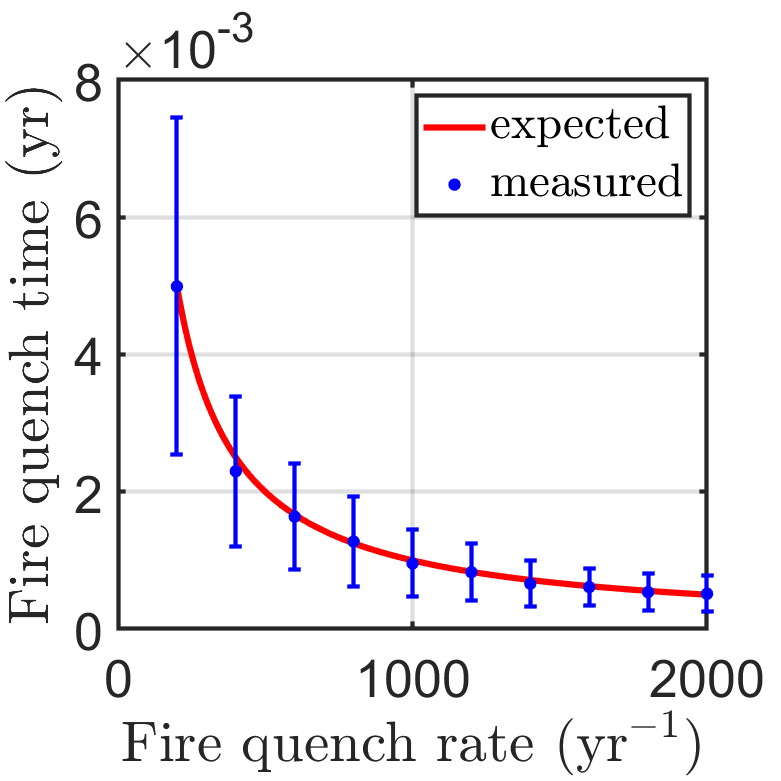}
\includegraphics[width=0.32\linewidth]{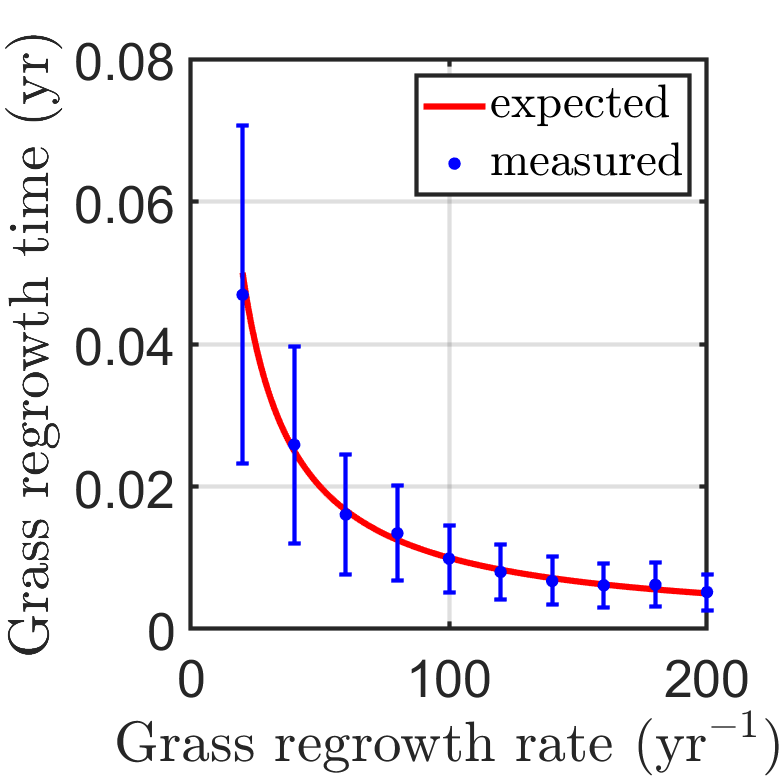}
\includegraphics[width=0.32\linewidth]{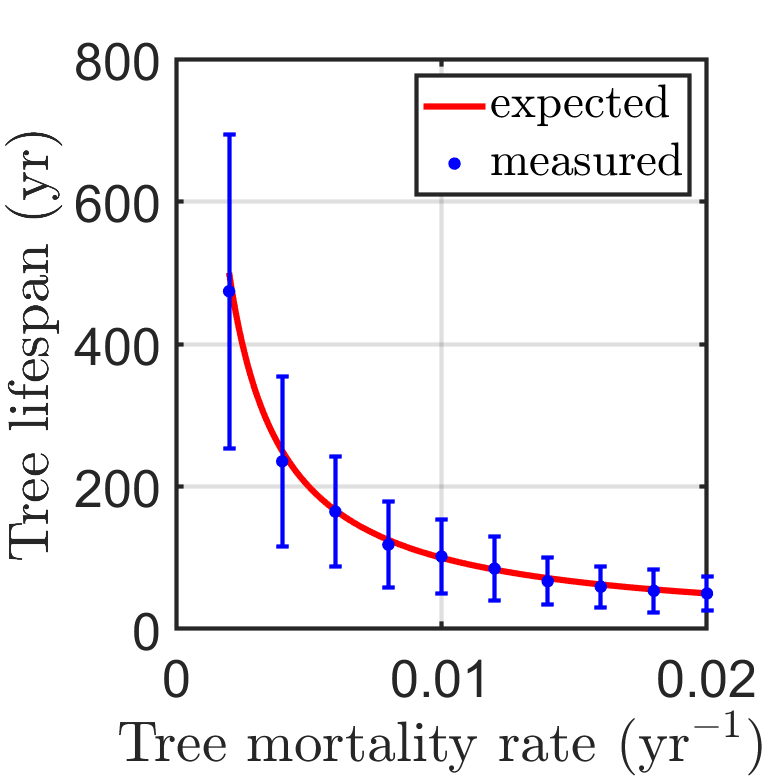}
\caption{Testing the fire quench (left), grass regrowth (center), and tree mortality (right) transitions}
\label{fig:fire_quench}
\end{figure} 

Lastly, to check forest mortality transitions we began the simulation with all sites in the forest state and removed forest spread by setting $\varphi_G = \varphi_A = 0$. Grass and forest fires were also eliminated by setting $g_0 = g_1 = f_0 = f_1 = 0$. The times at which each forest site transitioned into a grass state was recorded at various values of $\mu$. The tree lifetimes are expected to follow a Poisson distribution wtih rate $\mu$ giving an expected lifetime of $\frac{1}{\mu}$ with variance $\frac{1}{\mu}$.

\subsection{Testing spatial transitions}
We next tested the four directly spatial transitions in the FGBA model: the spread of forest through grass and through ash and the spread of fire through grass and through forest. 

Fire spread through grass was tested introducing the initial condition where a single randomly chosen vegetation site in the domain would be in a grass state while the remaining sites are all randomly assigned to be a either burning or ash states according to some fixed probability $p_{\text{ash}}$ of the state being ash. Further, $g_0 = g_1 = \gamma = q = 0$ was set to remove spontaneous grass ignitions and grass regrowth and prevent fire quenching. The burn rate at the grass site is then expected to be approximately $\beta_G p_{\text{ash}}$.  The time at which the grass site transitioned to a burning state was then measured at various values of $\beta_G$ and $p_{\text{ash}}$. The average of $600$ trials for each pair of values $(\beta_G, p_{\text{ash}})$ is plotted in Fig.~(\ref{fig:test_fire_spread}). 

Fire spread through forest was tested in an analogous situation where instead of a single grass site, a single forest site was used and $f_0 = f_1 = \mu = 0$ was set to eliminate all possible transitions except the forest to burning transition. The burn rate at the forest site is then expected to be approximately $\beta_F (1-p_{\text{ash}})$. The time at which the grass site transitioned to a burning state was then measured at various values of $\beta_F$ and $p_{\text{ash}}$. The average of $600$ trials for each pair of values $(\beta_F, p_{\text{ash}})$ is plotted in Fig.~(\ref{fig:test_fire_spread}).

\begin{figure}[ht] 
\centering  
\includegraphics[width=0.49\linewidth]{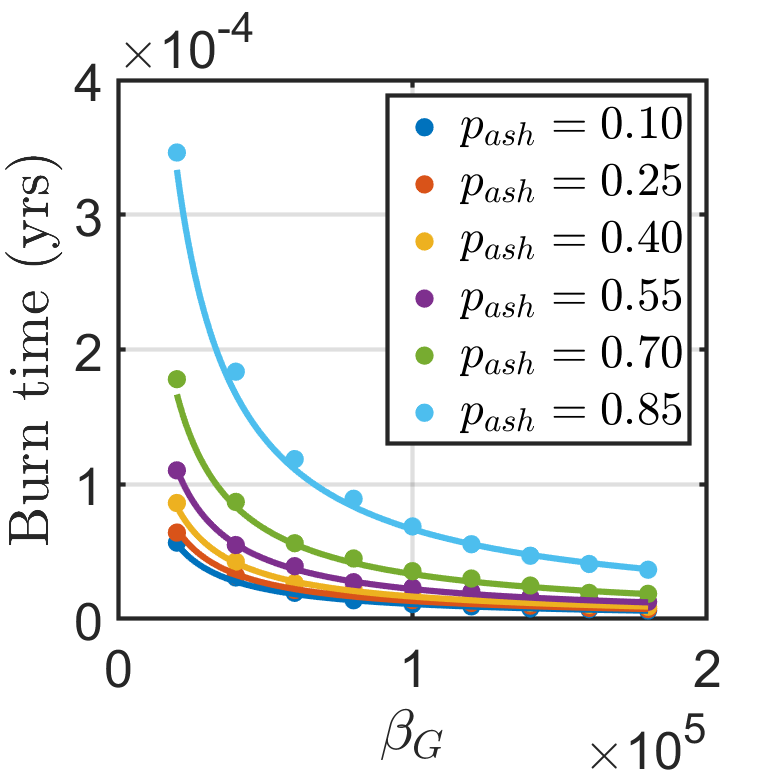}
\includegraphics[width=0.49\linewidth]{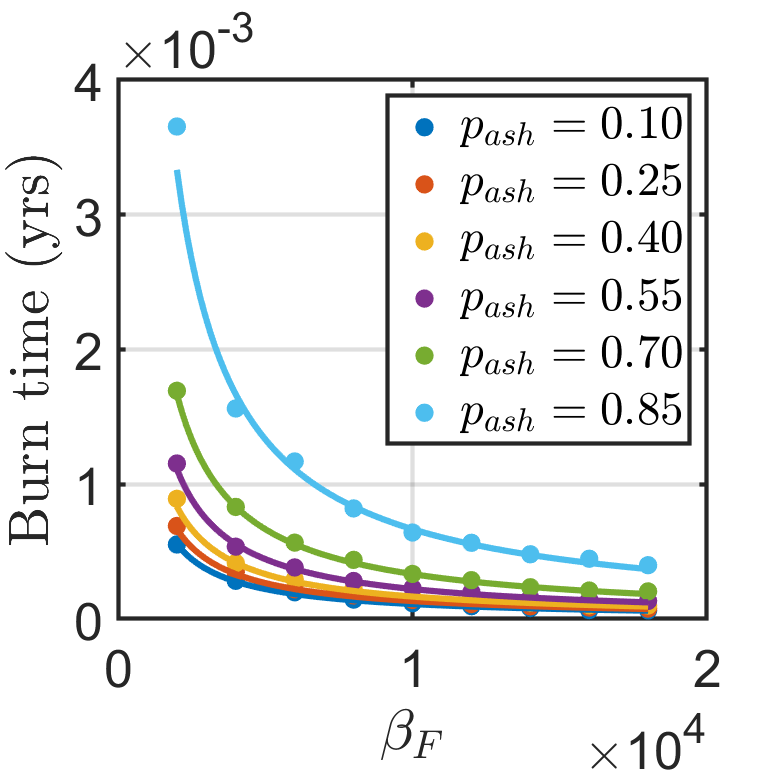}
\caption{Testing the spatial transition of grass burning (left) and forest burning (right) with N = 2000.}
\label{fig:test_fire_spread}
\end{figure} 

Next, the spread of forest through grass was tested by randomly choosing a single vegetation site to be grass; the remaining sites are randomly chosen to be in the forest state with probability $p_{\text{forest}}$ and are otherwise in the ash state. To prevent all other possible transitions besides the grass to forest transition we set $g_0 = g_1 = f_0 = f_1 = q = \gamma = \mu = \varphi_A = 0$. Then the rate at which the grass site should transition to forest is approximately $\varphi_G p_{\text{forest}}$. The time at which the grass site transitioned to a forest state was then measured at various values of $\varphi_G$ and $p_{\text{forest}}$. The average of five trials for each pair of values $(\varphi_G, p_{\text{forest}})$ is plotted in Fig.~(\ref{fig:test_forest_spread}).

Lastly, the spread of forest through ash was tested by randomly choosing a single vegetation site to be ash while the remaining sites are randomly chosen to be in the forest state with probability $p_{\text{forest}}$ and are otherwise in the grass state. To prevent all other possible transitions besides the ash to forest transition, we set $g_0 = g_1 = f_0 = f_1 = q = \gamma = \mu = \varphi_G = 0$. Then the rate at which the ash site should transition to forest is approximately $\varphi_A p_{\text{forest}}$. The time at which the grass site transitioned to a forest state was then measured at various values of $\varphi_A$ and $p_{\text{forest}}$. The average of five trials for each pair of values $(\varphi_A, p_{\text{forest}})$ is plotted in Fig.~(\ref{fig:test_forest_spread}).

\begin{figure}[ht] 
\centering  
\includegraphics[width=0.49\linewidth]{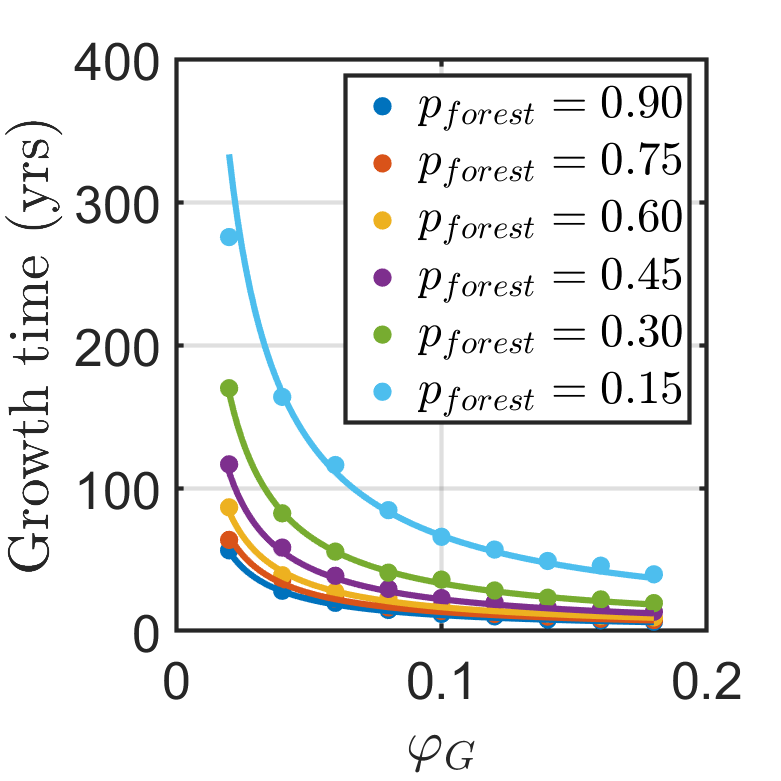}
\includegraphics[width=0.49\linewidth]{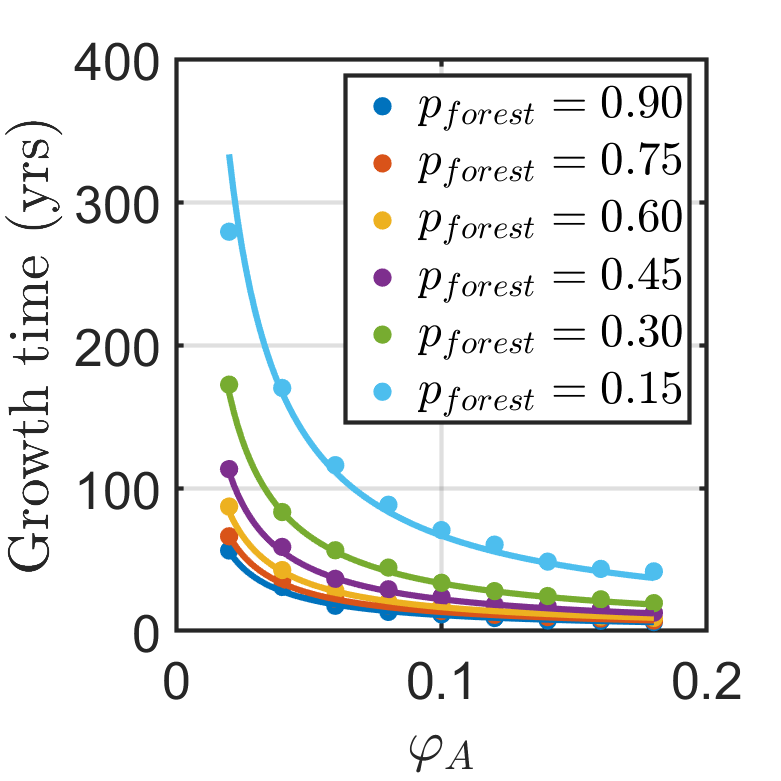}
\caption{Testing of the spatial transition of forest spread into grass (left) and ash (right).}
\label{fig:test_forest_spread}
\end{figure}

\section{Computation of linear regimes in non-fire forest mortality bifurcation plot} \label{sec:mu_lines}
Note the largely linear regimes within the non-fire forest mortality ($\mu$) bifurcation plots aside from an elbow at $F \approx 0.6$. To determine exact expressions for these two regimes, we note that the equation for $\dot{F}$ in system~(\ref{eqn:mf}) can only vanish for $F>0$ if
\begin{align*}
    \varphi(1-F) - (\varphi + \beta_F)(\overline{B}(F))- \Phi_F(F) - \mu = 0
\end{align*}
For $F < 0.6$ we approximate $\overline{B}(F)$ as a linear function of $F$ (see App.~\ref{secA2}) so that we obtain a line with x-intercept
\begin{align*}
    \mu = \varphi - (\varphi + \beta_F)\Big( \frac{(\beta_G-q) \gamma}{\beta_G(q+\gamma)}\Big) - f_0
\end{align*}
and slope
\begin{align*}
    m = \frac{\beta_G(q + \gamma)^2}{\beta_F(\beta_F-\beta_G)\gamma(q + \gamma) + q \beta_F(\beta_G-q)\varphi}.
\end{align*}
Similarly, for $F > 0.6$ an analogous calculation gives a line with a $y$-intercept at
\begin{align*}
    F &= \frac{\varphi - (\varphi + \beta_F)(\tilde{d}_0 - \tilde{d}_1) - f_1}{\varphi + (\varphi + \beta_F)\tilde{d}_1}
\end{align*}
and slope
\begin{align*}
    m = \frac{1}{-\varphi - (\varphi + \beta_F)\tilde{d}_1}.
\end{align*}
To understand the origin of the elbow at $F \approx 0.6$ we note that $\overline{B}'(F)$ is undefined when the discriminant $c_1(F)^2 - 4c_0(F)c_2(F)$ vanishes. To zeroth order in $f_0$ and $g_0$ we then find that $\overline{B}'(F)$ is undefined when 
\begin{align*}
    F^* &= \frac{q - \beta_G}{\beta_G - \beta_F}.
\end{align*}
Furthermore, to zeroth order in $f_0$ and $g_0$
\begin{align*}
    \overline{B}'(F) = \begin{cases}
        0 & F > F^* \\
        \frac{-q^2\varphi + q \beta_G(-\gamma + \varphi - 2F \varphi)+(\beta_F-\beta_G)(\gamma + F \varphi)^2 + q\beta_F(\gamma + 2 \varphi)}{\beta_G(q + \gamma + F \varphi)^2} & F < F^*.
    \end{cases}
\end{align*}
Note that all $F$-dependent terms are suppressed since $q \gg \gamma \gg \varphi$ so $\overline{B}'(F)$ is approximately constant in its two regimes, so $\overline{B}(F)$ is approximately linear. 




\end{appendices}


\bibliography{sn-bibliography}

\begin{thebibliography}{10}

\bibitem{abades2014fire}
S.~R. Abades, A.~Gaxiola, and P.~A. Marquet.
\newblock Fire, percolation thresholds and the savanna forest transition: a
  neutral model approach.
\newblock {\em Journal of Ecology}, 102(6):1386--1393, 2014.

\bibitem{accatino2013humid}
F.~Accatino and C.~De~Michele.
\newblock Humid savanna--forest dynamics: a matrix model with vegetation--fire
  interactions and seasonality.
\newblock {\em Ecological Modelling}, 265:170--179, 2013.

\bibitem{accatino2010tree}
F.~Accatino, C.~De~Michele, R.~Vezzoli, D.~Donzelli, and R.~J. Scholes.
\newblock Tree--grass co-existence in savanna: interactions of rain and fire.
\newblock {\em Journal of Theoretical Biology}, 267(2):235--242, 2010.

\bibitem{accatino2016trees}
F.~Accatino, K.~Wiegand, D.~Ward, and C.~De~Michele.
\newblock Trees, grass, and fire in humid savannas—the importance of life
  history traits and spatial processes.
\newblock {\em Ecological Modelling}, 320:135--144, 2016.

\bibitem{aleman2018spatial}
J.~C. Aleman and A.~C. Staver.
\newblock Spatial patterns in the global distributions of savanna and forest.
\newblock {\em Global Ecology and Biogeography}, 27(7):792--803, 2018.

\bibitem{archibald2009limits}
S.~Archibald, D.~P. Roy, B.~W. van Wilgen, and R.~J. Scholes.
\newblock What limits fire? an examination of drivers of burnt area in southern
  africa.
\newblock {\em Global Change Biology}, 15(3):613--630, 2009.

\bibitem{archibald2012evolution}
S.~Archibald, A.~C. Staver, and S.~A. Levin.
\newblock Evolution of human-driven fire regimes in africa.
\newblock {\em Proceedings of the National Academy of Sciences},
  109(3):847--852, 2012.

\bibitem{balch2011size}
J.~K. Balch, D.~C. Nepstad, L.~M. Curran, P.~M. Brando, O.~Portela,
  P.~Guilherme, J.~D. Reuning-Scherer, and O.~de~Carvalho~Jr.
\newblock Size, species, and fire behavior predict tree and liana mortality
  from experimental burns in the brazilian amazon.
\newblock {\em Forest Ecology and Management}, 261(1):68--77, 2011.

\bibitem{beckett2022pathways}
H.~Beckett, A.~C. Staver, T.~Charles-Dominique, and W.~J. Bond.
\newblock Pathways of savannization in a mesic african savanna--forest mosaic
  following an extreme fire.
\newblock {\em Journal of Ecology}, 110(4):902--915, 2022.

\bibitem{behling2007late}
H.~Behling, V.~D. Pillar, S.~C. M{\"u}ller, and G.~E. Overbeck.
\newblock Late-holocene fire history in a forest-grassland mosaic in southern
  brasil: Implications for conservation.
\newblock {\em Applied Vegetation Science}, 10(1):81--90, 2007.

\bibitem{biddulph1998fuels}
J.~Biddulph and M.~Kellman.
\newblock Fuels and fire at savanna-gallery forest boundaries in southeastern
  venezuela.
\newblock {\em Journal of Tropical Ecology}, 14(4):445--461, 1998.

\bibitem{bond2008limits}
W.~J. Bond.
\newblock What limits trees in c4 grasslands and savannas?
\newblock {\em Annual Review of Ecology, Evolution, and Systematics},
  39(1):641--659, 2008.

\bibitem{brando2014abrupt}
P.~M. Brando, J.~K. Balch, D.~C. Nepstad, D.~C. Morton, F.~E. Putz, M.~T. Coe,
  D.~Silv{\'e}rio, M.~N. Macedo, E.~A. Davidson, C.~C. N{\'o}brega, et~al.
\newblock Abrupt increases in amazonian tree mortality due to drought--fire
  interactions.
\newblock {\em Proceedings of the National Academy of Sciences},
  111(17):6347--6352, 2014.

\bibitem{brando2012fire}
P.~M. Brando, D.~C. Nepstad, J.~K. Balch, B.~Bolker, M.~C. Christman, M.~Coe,
  and F.~E. Putz.
\newblock Fire-induced tree mortality in a neotropical forest: the roles of
  bark traits, tree size, wood density and fire behavior.
\newblock {\em Global Change Biology}, 18(2):630--641, 2012.

\bibitem{cochrane1999positive}
M.~A. Cochrane, A.~Alencar, M.~D. Schulze, C.~M. Souza~Jr, D.~C. Nepstad,
  P.~Lefebvre, and E.~A. Davidson.
\newblock Positive feedbacks in the fire dynamic of closed canopy tropical
  forests.
\newblock {\em Science}, 284(5421):1832--1835, 1999.

\bibitem{cochrane2002fire}
M.~A. Cochrane and W.~F. Laurance.
\newblock Fire as a large-scale edge effect in amazonian forests.
\newblock {\em Journal of Tropical Ecology}, 18(3):311--325, 2002.

\bibitem{deangelis2017spatially}
D.~L. DeAngelis and S.~Yurek.
\newblock Spatially explicit modeling in ecology: a review.
\newblock {\em Ecosystems}, 20(2):284--300, 2017.

\bibitem{dhooge2008new}
A.~Dhooge, W.~Govaerts, Y.~A. Kuznetsov, H.~G.~E. Meijer, and B.~Sautois.
\newblock New features of the software matcont for bifurcation analysis of
  dynamical systems.
\newblock {\em Mathematical and Computer Modelling of Dynamical Systems},
  14(2):147--175, 2008.

\bibitem{durrett2018heterogeneous}
R.~Durrett and R.~Ma.
\newblock A heterogeneous spatial model in which savanna and forest coexist in
  a stable equilibrium.
\newblock {\em arXiv preprint arXiv:1808.08159}, 2018.

\bibitem{edelstein2005mathematical}
L.~Edelstein-Keshet.
\newblock {\em Mathematical models in biology}.
\newblock SIAM, Philadelphia, 2005.

\bibitem{fair2020spatial}
K.~R. Fair, M.~Anand, and C.~T. Bauch.
\newblock Spatial structure in protected forest-grassland mosaics: Exploring
  futures under climate change.
\newblock {\em Global Change Biology}, 26(11):6097--6115, 2020.

\bibitem{gebele1984site}
T.~Gebele.
\newblock Site percolation threshold for square lattice.
\newblock {\em Journal of Physics A: Mathematical and General}, 17(2):L51,
  1984.

\bibitem{gillespie1977exact}
D.~T. Gillespie.
\newblock Exact stochastic simulation of coupled chemical reactions.
\newblock {\em The Journal of Physical Chemistry}, 81(25):2340--2361, 1977.

\bibitem{goel2020dispersal}
N.~Goel, V.~Guttal, S.~A. Levin, and A.~C. Staver.
\newblock Dispersal increases the resilience of tropical savanna and forest
  distributions.
\newblock {\em The American Naturalist}, 195(5):833--850, 2020.

\bibitem{hastings2001transient}
A.~Hastings.
\newblock Transient dynamics and persistence of ecological systems.
\newblock {\em Ecology Letters}, 4(3):215--220, 2001.

\bibitem{hastings2018transient}
A.~Hastings, K.~C. Abbott, K.~Cuddington, T.~Francis, G.~Gellner, Y.-C. Lai,
  A.~Morozov, S.~Petrovskii, K.~Scranton, and M.~L. Zeeman.
\newblock Transient phenomena in ecology.
\newblock {\em Science}, 361(6406):eaat6412, 2018.

\bibitem{hebert2018edge}
L.~H{\'e}bert-Dufresne, A.~F. Pellegrini, U.~Bhat, S.~Redner, S.~W. Pacala, and
  A.~M. Berdahl.
\newblock Edge fires drive the shape and stability of tropical forests.
\newblock {\em Ecology Letters}, 21(6):794--803, 2018.

\bibitem{hennenberg2006phytomass}
K.~J. Hennenberg, F.~Fischer, K.~Kouadio, D.~Goetze, B.~Orthmann, K.~E.
  Linsenmair, F.~Jeltsch, and S.~Porembski.
\newblock Phytomass and fire occurrence along forest--savanna transects in the
  como{\'e} national park, ivory coast.
\newblock {\em Journal of Tropical Ecology}, 22(3):303--311, 2006.

\bibitem{hennenberg2008detection}
K.~J. Hennenberg, D.~Goetze, J.~Szarzynski, B.~Orthmann, B.~Reineking,
  I.~Steinke, and S.~Porembski.
\newblock Detection of seasonal variability in microclimatic borders and
  ecotones between forest and savanna.
\newblock {\em Basic and Applied Ecology}, 9(3):275--285, 2008.

\bibitem{hochberg1994influences}
M.~E. Hochberg, J.~C. Menaut, and J.~Gignoux.
\newblock The influences of tree biology and fire in the spatial structure of
  the west african savannah.
\newblock {\em Journal of Ecology}, pages 217--226, 1994.

\bibitem{hoffmann2009tree}
W.~A. Hoffmann, R.~Adasme, M.~Haridasan, M.~T.~de Carvalho, E.~L. Geiger, M.~A.
  Pereira, S.~G. Gotsch, and A.~C. Franco.
\newblock Tree topkill, not mortality, governs the dynamics of savanna--forest
  boundaries under frequent fire in central brazil.
\newblock {\em Ecology}, 90(5):1326--1337, 2009.

\bibitem{hoffmann2012ecological}
W.~A. Hoffmann, E.~L. Geiger, S.~G. Gotsch, D.~R. Rossatto, L.~C. Silva, O.~L.
  Lau, M.~Haridasan, and A.~C. Franco.
\newblock Ecological thresholds at the savanna-forest boundary: how plant
  traits, resources and fire govern the distribution of tropical biomes.
\newblock {\em Ecology Letters}, 15(7):759--768, 2012.

\bibitem{hoffmann2012fuels}
W.~A. Hoffmann, S.~Y. Jaconis, K.~L. Mckinley, E.~L. Geiger, S.~G. Gotsch, and
  A.~C. Franco.
\newblock Fuels or microclimate? understanding the drivers of fire feedbacks at
  savanna--forest boundaries.
\newblock {\em Austral Ecology}, 37(6):634--643, 2012.

\bibitem{hoffmann2003comparative}
W.~A. Hoffmann, B.~Orthen, and P.~K.~V. Do~Nascimento.
\newblock Comparative fire ecology of tropical savanna and forest trees.
\newblock {\em Functional Ecology}, pages 720--726, 2003.

\bibitem{holdo2022sapling}
R.~M. Holdo, J.~E. Donaldson, D.~M. Rugemalila, and T.~M. Anderson.
\newblock Sapling growth gradients interact with homogeneous disturbance
  regimes to explain savanna tree cover discontinuities.
\newblock {\em Ecological Monographs}, 92(3):e1514, 2022.

\bibitem{hoyer2021impulsive}
A.~Hoyer-Leitzel and S.~Iams.
\newblock Impulsive fire disturbance in a savanna model: Tree--grass
  coexistence states, multiple stable system states, and resilience.
\newblock {\em Bulletin of Mathematical Biology}, 83(11):113, 2021.

\bibitem{kellman1997fire}
M.~Kellman and J.~Meave.
\newblock Fire in the tropical gallery forests of belize.
\newblock {\em Journal of Biogeography}, 24(1):23--34, 1997.

\bibitem{klimasara2023model}
P.~Klimasara and M.~Tyran-Kami{\'n}ska.
\newblock A model of seasonal savanna dynamics.
\newblock {\em SIAM Journal on Applied Mathematics}, 83(1):122--143, 2023.

\bibitem{lasslop2016multiple}
G.~Lasslop, V.~Brovkin, C.~H. Reick, S.~Bathiany, and S.~Kloster.
\newblock Multiple stable states of tree cover in a global land surface model
  due to a fire-vegetation feedback.
\newblock {\em Geophysical Research Letters}, 43(12):6324--6331, 2016.

\bibitem{li2019spatial}
Q.~Li, A.~C. Staver, W.~E, and S.~A. Levin.
\newblock Spatial feedbacks and the dynamics of savanna and forest.
\newblock {\em Theoretical Ecology}, 12(2):237--262, 2019.

\bibitem{magnani2023fire}
M.~Magnani, R.~D{\'\i}az-Sierra, L.~Sweeney, A.~Provenzale, and M.~Baudena.
\newblock Fire responses shape plant communities in a minimal model for fire
  ecosystems across the world.
\newblock {\em The American Naturalist}, 202(3):E83--E103, 2023.

\bibitem{myers2000biodiversity}
N.~Myers, R.~A. Mittermeier, C.~G. Mittermeier, G.~A. Da~Fonseca, and J.~Kent.
\newblock Biodiversity hotspots for conservation priorities.
\newblock {\em Nature}, 403(6772):853--858, 2000.

\bibitem{nathan2012dispersal}
R.~Nathan, E.~Klein, J.~Robledo-Arnuncio, and E.~Revilla.
\newblock Dispersal kernels. vol. 15, 2012.

\bibitem{overbeck2007brazil}
G.~E. Overbeck, S.~C. M{\"u}ller, A.~Fidelis, J.~Pfadenhauer, V.~D. Pillar,
  C.~C. Blanco, I.~I. Boldrini, R.~Both, and E.~D. Forneck.
\newblock Brazil's neglected biome: the south brazilian campos.
\newblock {\em Perspectives in Plant Ecology, Evolution and Systematics},
  9(2):101--116, 2007.

\bibitem{patterson2024pattern}
D.~Patterson, S.~Levin, A.~C. Staver, and J.~Touboul.
\newblock Pattern formation in mesic savannas.
\newblock {\em Bulletin of {M}athematical {B}iology}, 86(1):3, 2024.

\bibitem{patterson2020probabilistic}
D.~D. Patterson, S.~A. Levin, C.~Staver, and J.~D. Touboul.
\newblock Probabilistic foundations of spatial mean-field models in ecology and
  applications.
\newblock {\em SIAM Journal on Applied Dynamical Systems}, 19(4):2682--2719,
  2020.

\bibitem{pueyo2010testing}
S.~Pueyo, P.~M.~L. De~Alencastro~Gra{\c{c}}a, R.~I. Barbosa, R.~Cots,
  E.~Cardona, and P.~M. Fearnside.
\newblock Testing for criticality in ecosystem dynamics: the case of amazonian
  rainforest and savanna fire.
\newblock {\em Ecology Letters}, 13(7):793--802, 2010.

\bibitem{rammer2019scalable}
W.~Rammer and R.~Seidl.
\newblock A scalable model of vegetation transitions using deep neural
  networks.
\newblock {\em Methods in Ecology and Evolution}, 10(6):879--890, 2019.

\bibitem{sagang2024interactions}
L.~B.~T. Sagang, I.~Tcheferi, P.~Ploton, M.~Libalah, M.~Simo-Droissart,
  N.~Sirri, G.~Dauby, E.~Ngansop, J.~P. Bissek, N.~Kamdem, et~al.
\newblock Interactions between soil and other environmental variables modulate
  forest expansion and ecotone dynamics in humid savannas of central africa.
\newblock {\em Proceedings B}, 291(2033):20241120, 2024.

\bibitem{schertzer2015implications}
E.~Schertzer, A.~Staver, and S.~A. Levin.
\newblock Implications of the spatial dynamics of fire spread for the
  bistability of savanna and forest.
\newblock {\em Journal of Mathematical Biology}, 70:329--341, 2015.

\bibitem{scholes1997tree}
R.~J. Scholes and S.~R. Archer.
\newblock Tree-grass interactions in savannas.
\newblock {\em Annual review of Ecology and Systematics}, 28(1):517--544, 1997.

\bibitem{staal2018resilience}
A.~Staal, E.~H. van Nes, S.~Hantson, M.~Holmgren, S.~C. Dekker, S.~Pueyo,
  C.~Xu, and M.~Scheffer.
\newblock Resilience of tropical tree cover: The roles of climate, fire, and
  herbivory.
\newblock {\em Global Change Biology}, 24(11):5096--5109, 2018.

\bibitem{staver2011tree}
A.~C. Staver, S.~Archibald, and S.~Levin.
\newblock Tree cover in sub-saharan africa: rainfall and fire constrain forest
  and savanna as alternative stable states.
\newblock {\em Ecology}, 92(5):1063--1072, 2011.

\bibitem{staver2012integrating}
A.~C. Staver and S.~A. Levin.
\newblock Integrating theoretical climate and fire effects on savanna and
  forest systems.
\newblock {\em The American Naturalist}, 180(2):211--224, 2012.

\bibitem{swaine1992effects}
M.~Swaine, W.~Hawthorne, and T.~Orgle.
\newblock The effects of fire exclusion on savanna vegetation at kpong, ghana.
\newblock {\em Biotropica}, pages 166--172, 1992.

\bibitem{tilman1994competition}
D.~Tilman.
\newblock Competition and biodiversity in spatially structured habitats.
\newblock {\em Ecology}, 75(1):2--16, 1994.

\bibitem{touboul2018complex}
J.~D. Touboul, A.~C. Staver, and S.~A. Levin.
\newblock On the complex dynamics of savanna landscapes.
\newblock {\em Proceedings of the National Academy of Sciences},
  115(7):E1336--E1345, 2018.

\bibitem{van2014tipping}
E.~H. van Nes, M.~Hirota, M.~Holmgren, and M.~Scheffer.
\newblock Tipping points in tropical tree cover: linking theory to data.
\newblock {\em Global Change Biology}, 20(3):1016--1021, 2014.

\bibitem{van2018fire}
E.~H. Van~Nes, A.~Staal, S.~Hantson, M.~Holmgren, S.~Pueyo, R.~E. Bernardi,
  B.~M. Flores, C.~Xu, and M.~Scheffer.
\newblock Fire forbids fifty-fifty forest.
\newblock {\em PloS one}, 13(1):e0191027, 2018.

\bibitem{weinan2021applied}
E.~Weinan, T.~Li, and E.~Vanden-Eijnden.
\newblock {\em Applied stochastic analysis}, volume 199.
\newblock American Mathematical Soc., Providence, Rhode Island, 2021.

\bibitem{wragg2018forbs}
P.~D. Wragg, T.~Mielke, and D.~Tilman.
\newblock Forbs, grasses, and grassland fire behaviour.
\newblock {\em Journal of Ecology}, 106(5):1983--2001, 2018.

\bibitem{wuyts2017amazonian}
B.~Wuyts, A.~R. Champneys, and J.~I. House.
\newblock Amazonian forest-savanna bistability and human impact.
\newblock {\em Nature {C}ommunications}, 8(1):15519, 2017.

\bibitem{wuyts2023emergent}
B.~Wuyts and J.~Sieber.
\newblock Emergent structure and dynamics of tropical forest-grassland
  landscapes.
\newblock {\em Proceedings of the National Academy of Sciences},
  120(45):e2211853120, 2023.

\bibitem{yatat2018tribute}
V.~D. Yatat, A.~Tchuint{\'e}, Y.~Dumont, and P.~Couteron.
\newblock A tribute to the use of minimalistic spatially-implicit models of
  savanna vegetation dynamics to address broad spatial scales in spite of
  scarce data.
\newblock {\em Biomath}, 7(2):1812167, 2018.

\bibitem{zelnik2024savanna}
Y.~R. Zelnik, I.~V. Yatat-Djeumen, and P.~Couteron.
\newblock Savanna-forest dynamics: Encroachment speed, model inference and
  spatial simulations.
\newblock {\em bioRxiv}, pages 2024--03, 2024.

\end{thebibliography}
\bibliographystyle{abbrv}

\end{document}